\newcommand\eg{\hbox{\textit{e.g.}}}
\newcommand\ie{\hbox{\textit{i.e.}}}
\newcommand{\nat}{\mathbb{N}}
\newcommand\sv{\mathtt{SV}}
\long\def\hide#1\endhide{}
\newcommand\ra\rightarrow
\newcommand\iimp{\ra}
\newcommand\bottom{\perp}
\newcommand\tensor\otimes
\def\Fscr{{\cal F}}
\def\Gscr{{\cal G}}
\def\Mscr{{\cal M}}
\def\Nscr{{\cal N}}
\def\Rscr{{\cal R}}
\def\Vscr{{\cal V}}
\newcommand{\lns}[1][]{\mathbin{\!/\mkern-5mu/^{#1}\!}}
\newcommand{\lnse}[1][]{\mathbin{\!\backslash\mkern-5mu\backslash^{#1}\!}}
\DeclareSymbolFont{symbolsC}{U}{txsyc}{m}{n}
\DeclareMathSymbol{\boxright}{\mathrel}{symbolsC}{128}
\DeclareMathSymbol{\boxRight}{\mathrel}{symbolsC}{136}
\newcommand{\4}{\mathsf{4}}
\newcommand{\5}{\mathsf{5}}
\newcommand{\cless}{\preccurlyeq}
\newcommand{\cut}{\mathsf{cut}}
\newcommand{\cls}{\mathtt{close}}
\newcommand{\D}{\mathsf{D}}
\newcommand{\defs}{:=}
\newcommand{\dpt}[1]{\operatorname{\mathsf{dp}}\left( #1 \right)}
\newcommand{\e}{\mathsf{e}}
\newcommand{\forcef}{\Vdash^{\forall}}
\newcommand{\G}{\mathsf{G}}
\newcommand\Gtcp{\mathsf{G3cp}}
\newcommand\GtE{\mathsf{G3E}}
\newcommand\GtI{\mathsf{G3I}}
\newcommand\GtM{\mathsf{G3M}}
\newcommand\Gtmm{\mathsf{G3MM}}
\newcommand{\ibox}[1]{{\Box_{#1}}}% alternatively: {{\left[#1\right]}}
\newcommand{\id}{\mathtt{id}}
\newcommand{\init}{\mathsf{init}}
\newcommand{\K}{\mathsf{K}}
\newcommand{\lab}{\!:\!}
\newcommand{\lift}{\mathtt{lift}}
\newcommand{\lowe}{\mathtt{lower}}
\newcommand{\LL}{\mathsf{LL}}
\newcommand{\LLS}{\mathsf{LbLNS}}
\newcommand{\LLNS}{\mathsf{LbNS}}
\newcommand{\LNS}{\mathsf{LNS}}
\newcommand{\m}{\mathsf{m}}
\newcommand{\M}{\mathsf{M}}
\newcommand\mLJ{\mathsf{mLJ}}
\newcommand{\nes}[1]{\left[#1\right]}
\newcommand{\nese}[1]{\left[#1\right]^\e}
\newcommand{\NS}{\mathsf{NS}}
\newcommand{\SC}{\mathsf{SC}}
\newcommand{\seq}{\vdash}
\newcommand{\Sf}{\mathsf{S4}}
\newcommand{\Sfi}{\mathsf{S5}}
\newcommand{\T}{\mathsf{T}}
\newcommand{\TB}{\mathsf{TB}}
\newcommand{\upset}[2][]{{\uparrow^{#1}\!\!({#2})}}
\newcommand\world[2]{#1:#2}
\title{Proof systems: from nestings to sequents and back}
\author{Elaine Pimentel\inst{1}\thanks{Extended version. This work has been partially supported by CAPES, CNPq and  project FWF START Y544-N23.}}
\institute{
Departamento de Matem\'atica, UFRN, Brazil}
\begin{document}
\maketitle

\begin{abstract}
%!TEX root = IJCAR-18-ext.tex

In this work, we explore proof theoretical connections between sequent, nested  and labelled calculi. In particular, we show
a general algorithm for transforming a class of nested systems into  sequent calculus systems, passing through linear nested systems. Moreover, we show a semantical characterisation  of intuitionistic, multi-modal and non-normal modal logics for all these systems, via a case-by-case translation between labelled nested to labelled sequent systems.  \end{abstract}

\section{Introduction}\label{sec:intro}
%!TEX root = IJCAR-18-ext.tex

The quest of finding {\em good} proof systems for different logics has been the main research topic for proof theorists since Gentzen's seminal work~\cite{gentzen35}. The definition of good is, of course, subjective.
While it is widely accepted that a proposed calculus has to be sound and complete w.r.t. a given semantics, other aspects such as {\em analicity}, {\em simplicity}, and {\em efficiency} are often taken into account for considering a calculus ``adequate''.

One of the best known formalisms for proposing analytic proof systems is Gentzen's {\em sequent calculus}. While its simplicity makes it an ideal tool for proving meta-logical properties, sequent calculus is not expressive enough for constructing analytic calculi for many logics of interest. 
The case of modal logic is particularly problematic, since sequent systems for such logics  are
usually not modular, and they mostly lack relevant
properties such as separate left and right introduction rules for the
modalities.  These problems are often connected to the fact that the
modal rules in such calculi usually introduce more than one connective
at a time, e.g. as in the
rule $\mathsf{k}$ for modal logic $\K$:
\begin{center}
$
\infer[\mathsf{k}]{\Box B_1, \ldots, \Box B_n\seq\Box A}{B_1, \ldots, B_n\seq A} 
$
\end{center} 
One way of solving this problem is by considering extensions of the
sequent framework that are expressive enough for capturing these
modalities using separate left and right introduction rules.  This is
possible e.g. in \emph{labelled
  sequents}~\cite{DBLP:books/daglib/0003059} or in \emph{nested sequents}~\cite{Brunnler:2009kx}.
In the labelled sequent framework,
usually the semantical characterisation  
is explicitly added to sequents. In the nested
framework in contrast, a single sequent is replaced with
a tree of sequents, where successors of a sequent (nestings) are interpreted under a given modality. The nesting rules of these calculi govern the transfer
of  formulae between the different sequents, and 
they are  {\em local}, in the sense that it is sufficient to transfer only one formula at a
time. As an example, the labelled and nested versions for the {\em necessity right rule} ($\Box_R$) are
\begin{center}
$ \infer[\Box_R^l]{\mathcal{R},X\seq Y,x\lab \Box A}{\mathcal{R},xRy,X\seq Y,y\lab A} \qquad
           \infer[\Box_R^n]{\Gamma\vdash \Delta,\Box A}{\Gamma\vdash\Delta,[\cdot\vdash A]}
$
\end{center}
where $y$ is a fresh variable in the $\Box_R^l$ rule.
Reading bottom up, while the labelled system creates a new variable $y$ related to  $x$ via a relation $R$ and changes the label of $A$ to $y$, in $\Box_R^n$ a new nesting is created, and $A$ is moved there.
It seems clear that nestings and semantical structures are somehow related. 
Indeed, a direct translation between proofs in labelled and nested systems for 
some
normal modal logics is presented in~\cite{DBLP:conf/aiml/GoreR12}, while in~\cite{Fitting:2014} it is shown how to relate nestings with Kripke structures for intuitionistic logic.
In this work, we show how to smoothly generalise this  relationship to multi-modalities,
where intuitionistic logic and  normal modal logics are particular cases.

Since nested systems have being also proposed for other modalities, such as the non-normal ones~\cite{Chellas:1980fk}, an interesting question is 
whether this semantical interpretation can be generalised to other systems as well.
In~\cite{Negri2017} a labelled approach was used for setting the grounds for proof theory of some non-normal modal systems based on {\em neighbourhood semantics}. 
In parallel, we have proposed~\cite{DBLP:journals/corr/LellmannP17} modular systems based on  nestings for several non-normal modal logics. We will relate these two approaches for the logics $\M$ and $\mathsf{E}$, hence clarifying the nesting-semantics relationship for such logics. 

While nested sequents allow for modular proposal of proof systems, it comes with a price:
the obvious proof search procedure is of suboptimal complexity since
it constructs potentially exponentially large nested
sequents~\cite{Brunnler:2009kx}. In this work,
we show that a class of nested systems can be transformed into sequent 
systems via a linearisation procedure, where sequent rules can be seen as nested {\em macro-rules}. In this way, we do not only recover simplicity
but also efficiency by defining an optimal proof search procedure, in the sense that it matches the computational complexity of the validity problem in the logic.

Finally, by relating nested and sequent systems, we are able to extend the semantical interpretation also to the sequent case, hence closing the relationship between systems and shedding light on the semantical interpretation of several sequent based systems.

{\bf Organisation and contributions.} Sec.~\ref{sec:sc} presents the basic notation for sequent systems; Sec.~\ref{sec:nested} shows sufficient conditions for a nested system to be linearised, so that to be presented 
as linear a nested system; Sec~\ref{sec:lnested} shows how to transform linear nested systems into sequent systems; Sec~\ref{sec:lbns} presents the basic notation for labelled systems; Sec.~\ref{sec:int}, \ref{sec:mm} and 
\ref{sec:nn} show the results under the particular views of intuitionistic, multi-modal and non-normal logics; Sec.~\ref{sec:conc} concludes the paper.

%Hence, we propose a semantical view for a large class of logics, including intuitionistic logic, (classical) normal multi-modal  and non normal logics, presented in different formalisms (nested, labelled, sequent). 

\section{Sequent systems}\label{sec:sc}
%!TEX root = IJCAR-18.tex

Contemporary proof theory started with Gentzen's work~\cite{gentzen35}, and it has had a
continuous development with the proposal of several proof systems for
many logics.

\begin{definition}\label{def:seq}
  A {\em sequent} is an expression of the form $\Gamma\vdash \Delta$
  where $\Gamma$ (the \emph{antecedent}) and $\Delta$ (the
  \emph{succedent}) are finite multisets of formulae. A {\em sequent calculus ($\SC$)} consists of a set of rules, 
  of
  the form
  $$
  \infer[r]{S}{S_1\quad\cdots\quad S_n}
  $$
  where the sequent $S$ is the {\em conclusion} inferred from the {\em
    premise} sequents $S_1,\ldots,S_n$ in the rule $r$.  If the set of
  premises is empty, then $r$ is an {\em axiom}.
where the formula  in the
  conclusion sequent in which a rule is applied is the {\em principal formula}, and its
  sub-formulae in the premises are the {\em auxiliary formulae}.

A {\em derivation} is a finite directed tree
with nodes labelled by sequents and a single root, axioms at the top
nodes, and where each node is connected with the (immediate) successor
nodes 
(if any) 
according to the inference rules.
  The {\em height} of a derivation is the greatest number of
  successive applications of rules in it, where an axiom has height 0.
\end{definition}
As an example, 
Fig.~\ref{fig:mLJ} presents $\SC_\mLJ$~\cite{maehara54nmj}, a multiple conclusion sequent system for propositional intuitionistic logic.
\begin{figure}[t]
$
\begin{array} {cccc}
\infer[{\iimp_L}]{\Gamma, A\iimp B \seq \Delta}{\Gamma, A\iimp B
 \seq A, \Delta \quad \Gamma, B \seq
\Delta }
&
\infer[{\iimp_R}]{\Gamma \seq A \iimp B, \Delta}{\Gamma, A
\seq B}
&
\infer[{\land_L}]{\Gamma, A \land B \seq \Delta}{\Gamma,  A,B \seq \Delta }
&
\infer[{\land_R}]{\Gamma \seq A \land B, \Delta}{\Gamma \seq
 A, \Delta \quad \Gamma \seq B, \Delta}
\\
[5pt]
\infer[{\lor_L}]{\Gamma, A \lor B \seq  \Delta}{\Gamma, 
A, \seq \Delta \quad \Gamma, B \seq  \Delta}
&
\infer[{\lor_R}]{\Gamma \seq A \lor B, \Delta}{\Gamma \seq A,B, \Delta }
&
%\infer[{\forall_L}]{\Gamma, \forall x\,A \seq \Delta}{\Gamma,
%\forall x\,A, A\{t/x\} \seq \Delta} 
%&
%\infer[{\forall_R}]{\Gamma \seq \Delta, \forall x\,A}{\Gamma
%\seq A\{c/x\}}
%\\
%[5pt]
%\infer[{\exists_L}]{\Gamma, \exists x\, A \seq
%\Delta}{\Gamma, \exists x\, A , A\{c/ x\} \seq \Delta} 
%&
%\infer[{\exists_R}]{\Gamma \seq \Delta, \exists x\, A}
%{\Gamma \seq \Delta, \exists x\, A, A \{t/x\}}
\infer[{\bot_L}]{\Gamma,\bot \seq \Delta}{} 
&
 \infer[{\init}]{\Gamma,P \seq P, \Delta}{} 
\end{array}
$
\caption{Multi-conclusion intuitionistic calculus $\SC_\mLJ$. $P$ is atomic.}\label{fig:mLJ}
\end{figure}
The rules are exactly the same as in classical logic,
%$\Gtcp$~\cite{troelstra96bpt},  
except for the implication rules. While the left rule copies the implication in the left premise, the right implication forces all formulae in the succedent of the conclusion sequent to be weakened (when viewed bottom-up). This guarantees that, on applying the ($\iimp_R$) rule on $A\iimp B$, the formula $B$ should be proved assuming {\em only} the pre-existent antecedent context extended with the formula $A$, creating an interdependency between $A$ and $B$.

\section{Nested systems}\label{sec:nested}
%!TEX root = IJCAR-18-ext.tex

Nested systems~\cite{Brunnler:2009kx,Poggiolesi:2009vn} are
 extensions of the sequent framework where a single sequent is replaced with
a tree of sequents. 
\begin{definition}\label{def:nested}
A {\em nested sequent} is defined inductively as follows: 
\begin{itemize}
\item[(i)] if $\Gamma\seq\Delta$  is a sequent, then it is a nested sequent;
\item[(ii)] if $\Gamma\seq\Delta$ is a sequent and $G_1,\ldots,G_n$ are nested sequents, then $\Gamma\seq\Delta,[G_1],\ldots,[G_n]$ is a nested sequent.
\end{itemize}
A {\em nested system  ($\NS$)} consists of a set of inference rules acting on
nested sequents.
An {\em auxiliary structure} is either an auxiliary formula or 
a nesting created by the principal formula of a nested rule.
\end{definition}
For readability, we will denote by $\Gamma,\Delta$ sequent contexts and by $\Lambda$ multisets of nestings.
In this way, every nested sequent  has the shape $\Gamma\seq\Delta, \Lambda$
where elements of $\Lambda$ have the shape 
$[\Gamma'\seq\Delta', \Lambda']$ and so on. We will denote by $\Upsilon$ an arbitrary nested sequent. 

Nodes in a nesting tree are called {\em positions}~\cite{Klop:2001:TRS:559395}.  
This terminology is  transferred to formulae, so that we also refer to the position of a formula in a nesting. We will index a node in a nesting  tree by a finite, possibly empty, sequence of positive integers
$i_1.i_2.\ldots.i_k$, indicating the path from the root of the term tree to the intended position. Such sequences will be used to denote the corresponding positions.  
Given two sequences $p,q$, we write $p \leq q$ if $q$ extends $p$, that is, if $q = p.p'$ for some sequence $p'$. We will denote by $p\parallel q$ if $p$ and $q$ are not comparable. Note that $p\leq q$ iff $q$ is in the subtree with $p$ as root.
We will denote by  $\id_\Upsilon(A)$ the position of a formula $A$ in a nested sequent $\Upsilon$. 
\begin{definition}\label{def:rd}
Let $A,B$ be formulae occurring in  nested sequents $\Upsilon,\Upsilon'$ respectively. Then 
$A\preceq B$ iff $\id_\Upsilon(A)\leq \id_{\Upsilon'}(B)$. 
\end{definition}
Rules in nested systems will be represented using {\em holed contexts}. 
\begin{definition}\label{def:holecont}
A {\em nested-holed context} is a nested sequent that contains a hole of the form $\{ \; \}$ in place of nestings. We represent such a context as $\Upsilon\{\;\}$. Given  a holed context and a nested sequent $\Upsilon'$, we write $\Upsilon\{\Upsilon'\}$ to stand for the nested sequent where the hole $\{\;\}$  has been replaced by $\nes{\Upsilon'}$, assuming that the hole is removed if $\Upsilon'$ is empty
and if $\Upsilon$ is empty then $\Upsilon\{\Upsilon'\}=\Upsilon'$. 
The depth of $\Upsilon\{\;\}$, denoted by $\dpt{\Upsilon\{\;\}}$, is the number of nodes on a branch of the nesting tree of $\Upsilon\{\;\}$ of maximal length.
\end{definition}
For example, $(\Gamma\seq\Delta,\{\;\})\{\Gamma'\seq\Delta'\}=
\Gamma\seq\Delta,\nes{\Gamma'\seq\Delta'}$ while 
$(\seq \{\;\})\{\Gamma'\seq\Delta'\}=
\Gamma'\seq\Delta'$. 
%We will represent by
%$\entails_{\mathcal{C}}\Upsilon$ the fact that $\Upsilon$ has a
%proof in the nested system $\mathcal{C}$. 
Fig.~\ref{fig:nestedmLJ} presents the  $\NS_{\mLJ}$~\cite{Fitting:2014}, a nested
system for $\mLJ$.
\begin{figure}[t]
\[
\infer[\iimp_L^n]{\Upsilon\{\Gamma, A \iimp B\seq\Delta,\Lambda\}}{\Upsilon\{\Gamma, A \iimp B\seq\Delta,A,\Lambda\}\quad
\Upsilon\{\Gamma,B\seq\Delta,\Lambda\}} \quad
\infer[\iimp_R^n]{\Upsilon\{\Gamma\seq\Delta, A \iimp B,\Lambda\}}{\Upsilon\{\Gamma\seq\Delta,\Lambda,\nes{A\seq B}\}}
\]
\[
\infer[{\land_L^n}]{\Upsilon\{\Gamma, A \land B \seq \Delta,\Lambda\}}{\Upsilon\{\Gamma,  A,B \seq \Delta,\Lambda\} }
\quad
\infer[{\land_R^n}]{\Upsilon\{\Gamma \seq A \land B, \Delta,\Lambda\}}{\Upsilon\{\Gamma \seq
 A, \Delta,\Lambda\} \quad \Upsilon\{\Gamma \seq B, \Delta,\Lambda\}}
\]
\[
\infer[{\lor_L^n}]{\Upsilon\{\Gamma, A \lor B \seq  \Delta,\Lambda\}}{\Upsilon\{\Gamma, 
A, \seq \Delta,\Lambda\} \quad \Upsilon\{\Gamma, B \seq  \Delta,\Lambda\}}
\quad
\infer[{\lor_R^n}]{\Upsilon\{\Gamma \seq A \lor B, \Delta,\Lambda\}}{\Upsilon\{\Gamma \seq A,B, \Delta,\Lambda\} }
\]
\[
\infer[\lift^n]{\Upsilon\{\Gamma, A\seq\Delta,\Lambda,\nes{\Gamma'\seq\Delta',\Lambda'}\}}{\Upsilon\{\Gamma,A\seq\Delta, \Lambda\nes{\Gamma', A\seq\Delta',\Lambda'}\}}
\quad
\infer[{\bot_L^n}]{\Upsilon\{\Gamma,\bot \seq \Delta,\Lambda\}}{} 
\quad
 \infer[{\init^n}]{\Upsilon\{\Gamma,P \seq P, \Delta,\Lambda\}}{} 
\]
\caption{Nested system $\NS_{\mLJ}$.}\label{fig:nestedmLJ}
\end{figure}

\subsubsection{Normal forms in $\NS$}\label{sec:nfnested}
%!TEX root = IJCAR-18-ext.tex

While adding a tree structure to sequents enhances the expressiveness of the nesting framework when compared with the sequent one, the price to pay
is that
the obvious proof search procedure is of suboptimal complexity, since
it constructs potentially exponentially large nested
sequents~\cite{Brunnler:2009kx}. Hence the quest for proposing {\em proof search strategies} for  taming the proof search space (see e.g.~\cite{DBLP:conf/fossacs/ChaudhuriMS16}  for proof strategies  based on {\em focusing}). In what follows, we will propose a {\em normalisation procedure}  (hence also a proof strategy) that allows for reconciling the added extra superior expressiveness and
modularity of nested sequents over ordinary sequents with the computational behaviour of the
standard sequent framework.

Permutability of rules is often used in sequent systems in order to establish a notion of normal forms to proofs, where application of rules can be re-organised so to follow a determinate shape. 
\begin{definition}\label{def:perm}
Let $S$ be a nested sequent in a $\NS$. 
We say that a rule {\em $r_2$ permutes down $r_1$}
 ($r_2 \downarrow r_1$) if, for every derivation  in which 
  $r_1$ operates on $S$ and 
  $r_2$ operates on one
  or more of $r_1$'s premises (but not on auxiliary structures of
  $r_1$), there exists another derivation of $S$  in
  which $r_2$ operates on $S$ and $r_1$ operates on zero or more of
  $r_2$'s premises (but not on auxiliary structures of $r_2$).
 If $r_2\downarrow r_1$ and $r_1 \downarrow r_2$ we will say that $r_1,r_2$ are {\em permutable}. If all 
pair of rules in $\NS$ permute we say that $\NS$ {\em is fully permutable}.
\end{definition}
%\begin{definition}\label{def:inv}
%  Let $\NS$ be a nested system. An inference  rule $\frac{S_1 \cdots
%    S_n}{S}$ is called:
%\begin{itemize}
%\item[i.] {\em admissible in $\NS$} if $S$ is derivable in
%  $\NS$ whenever $S_1,\ldots, S_n$ are derivable in $\NS$.
%\item[ii.]  {\em invertible in $\NS$} if the rules
%  $\frac{S}{S_1}, \ldots, \frac{S}{S_n}$ are admissible in
%  $\NS$.
%\end{itemize}
%\end{definition}

In the case of nested systems, permutability alone is often not enough for proposing 
effective proof strategies, since rules can be applied deep inside a nesting and also among nestings. 
%In what follows, we will establish some extra conditions  for the existence of a depth first normalisation procedure in nested systems. 
In what follows, we will restrict the rules so to allow only exchange of formulae  in inner nestings.
\begin{definition}
Let 
$r$ be a  rule in $\NS$ with principal formula $A$ and auxiliary formulae $A_1,\ldots,A_k$. 
We say that $r$ is  {\em inter-nested} if  $A\prec A_i$ or $A\parallel A_i$ for some $i\in\{1,\ldots,k\}$.
$\NS$ is 
{\em n-directed} if, for  every 
inter-nested
rule $r$, $A\prec A_i,\forall i\in\{1,\ldots,k\}$.  $\NS$ is {\em shallow n-directed} if, for each $i\in\{1,\ldots,k\}$, there exists a $j_i\in\mathbb{N}$ such that $\id_{\Upsilon_i}(A_i)=\id_{\Upsilon}(A).j_i$.
\end{definition}
%Observe that n-directness implies that rules can only move formulae in inner nestings. 
Observe that the nested rules for the modalities $\5$ and $\mathsf{b}$ {\em are not} n-directed (see~\cite{DBLP:conf/fossacs/ChaudhuriMS16}). It is interesting to note that there is no cut-free sequent calculus for $\Sfi$ either.
 
\begin{lemma}\label{lemma:prov}
Let $\NS$ be a 
n-directed nested system and let
$\Lambda_i$ be nestings in $\NS$. 
If
$\seq\Lambda_1,\ldots,\Lambda_n$ is derivable  then 
$\seq\Lambda_i$  is derivable for some $i\in\{1,\ldots,n\}$.
\end{lemma}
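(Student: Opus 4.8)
The plan is to argue by induction on the height of the derivation of $\seq\Lambda_1,\ldots,\Lambda_n$. The key observation is that the root sequent has empty antecedent and empty ``flat'' succedent: it consists only of nestings. So whatever rule $r$ is applied last, its principal formula (if any) must live inside one of the $\Lambda_i$, or the rule is one of the structural/axiom rules. The aim is to show that $r$ can only ever ``see'' one of the $\Lambda_i$ at a time, so that the derivation really is a derivation of $\seq\Lambda_j$ for a single fixed $j$, possibly after we discard the irrelevant sibling nestings.

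First I would set up the case analysis on the last rule $r$ of the derivation, with principal formula $A$ and auxiliary formulae $A_1,\dots,A_k$. Since the conclusion is $\seq\Lambda_1,\ldots,\Lambda_n$, the position $\id(A)$ is of the form $j.p$ for some $j\in\{1,\dots,n\}$ and some sequence $p$ — i.e. $A$ occurs strictly inside the $j$-th top-level nesting (it cannot sit at the empty root position, which holds no formula). Now I split on whether $r$ is inter-nested. If $r$ is \emph{not} inter-nested, then every auxiliary formula $A_i$ satisfies $A_i\preceq A$, hence also lies in the subtree rooted at position $j$, so the other nestings $\Lambda_{i'}$ ($i'\neq j$) are passive context and can be erased; applying $r$ inside $\Lambda_j$ alone and invoking the induction hypothesis on the (at most one relevant) premise gives $\seq\Lambda_j$ derivable. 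If $r$ \emph{is} inter-nested, then because $\NS$ is n-directed we have $A\prec A_i$ for all $i$, so every $A_i$ lies in the subtree rooted at $\id(A)=j.p$, again entirely within $\Lambda_j$; the same erasure argument applies. The remaining cases are the structural and axiom rules: $\init^n$ and $\bot^n$ require their principal atom/constant at the node where they fire, which is again inside a single $\Lambda_j$, so the conclusion $\seq\Lambda_j$ is already an instance of the axiom; $\lift^n$-style rules move a formula from an outer to an inner sequent, but here the outer node would be the root, which is empty, so this rule is simply inapplicable at the root.

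The main obstacle I expect is handling rules with several premises correctly under the induction: I must check that in each premise the top-level structure is still of the form $\seq\Lambda_1',\ldots,\Lambda_n'$ with each $\Lambda_i'$ obtained from $\Lambda_i$ by a modification confined to the $j$-th block (for $i=j$) or unchanged (for $i\neq j$), so that the induction hypothesis applies and yields the \emph{same} index $j$ in every premise. This is where n-directedness does the real work: it guarantees that no rule application at the root can have its activity ``straddle'' two sibling nestings, so the branching of the derivation never forces us to pick different $j$'s in different branches. A secondary point to be careful about is the degenerate cases of the holed-context substitution (empty $\Upsilon$, empty $\Upsilon'$) from Definition~\ref{def:holecont}, which must be matched against the shape of the root sequent; these are routine but need to be mentioned so that the erasure of sibling nestings is literally a well-formed nested sequent.
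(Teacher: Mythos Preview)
Your approach by induction on derivation height is correct and more explicit than the paper's. The paper argues globally: since $\NS$ is n-directed, any rule instance with principal formula in some $\Lambda_i$ has all its auxiliary formulae in the same $\Lambda_i$, so instances acting on distinct top-level nestings are independent and the proof can be re-ordered so that there is no interleaving between them. Your induction unwinds the same observation locally, erasing the inactive sibling nestings at each step rather than permuting the whole derivation; both routes rest on the same key fact, but yours makes the ``erasure'' explicit, whereas the paper leaves the final step (from ``no interleaving'' to ``some $\seq\Lambda_i$ is derivable'') to the reader.

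Two small points to tighten. First, in the multi-premise case the induction hypothesis need not return the same index $j$ for every premise: if some premise yields an index $i\neq j$, then $\Lambda_i$ is untouched in that premise and $\seq\Lambda_i$ is already the desired conclusion; only when \emph{every} premise returns $j$ do you need to reassemble via $r$ restricted to $\Lambda_j$. So n-directedness is not what forces a uniform choice of $j$ --- it merely guarantees that a non-$j$ answer finishes the proof immediately. Second, you have the direction of $\preceq$ reversed in the non-inter-nested case: by Definition~\ref{def:rd}, $A_i\preceq A$ means $A_i$ sits at an \emph{ancestor}-or-equal position of $A$, not in the subtree below it. In the concrete systems this is harmless because the non-inter-nested rules are local (auxiliary formulae at the same position as the principal), so $A_i$ does lie inside $\Lambda_j$; but as written the implication ``$A_i\preceq A$, hence $A_i$ lies in the subtree rooted at $j$'' would fail for a hypothetical rule that pushes a formula outward toward the root --- a case the paper's definition of n-directed does not explicitly exclude either.
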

\begin{proof}
Let $\pi$ be a 
proof of $\seq\Lambda_1,\Lambda_2,\ldots,\Lambda_n$. 
Since $\NS$ is n-directed, any instance of a rule $r$ acting in $\Lambda_i$ 
is such that the auxiliary formulae remain in $\Lambda_i$. This means that they cannot be principal in any rule $s$ acting in $\Lambda_j, i\not=j$. 
Hence  $\pi$
can be re-ordered so that there is no interleaving of rules applied to different nestings. 

We observe that the theorem is valid in the other direction if weakening on nestings is available (\eg with a weakening absorbing initial axiom).\qed
\end{proof}
That is, in n-directed systems the comma on the right context is {\em intuitionistic}.
Moreover, if the n-directness is shallow, a depth first normalization procedure can be defined, which induces a {\em depth first proof strategy}. 
\begin{theorem}\label{thm:norm}
Let $\NS$ be a fully permutable, shallow n-directed nested system. 
Then any proof $\pi$ of a nested sequent $\Upsilon$ can be re-organised so that to satisfy the following 
%{\em depth first} 
normalisation procedure:
\begin{itemize}
\item {\em local phase:} apply, in any order, 
rules that do not move formulae between nestings;
\item {\em nesting phase:} apply, in any order, all possible  rules creating nestings; 
\item
{\em lift phase:} apply inter-nested rules; 
\item
{\em deep phase:} start the process again deep inside a sub-nesting;
\item Axioms are applied eagerly. 
\end{itemize}
\end{theorem}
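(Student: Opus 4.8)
The plan is to attach to every rule instance of $\pi$ a \emph{priority} extracted from the position and kind of its principal formula, use full permutability to sort the instances into this priority order along every branch (reading $\pi$ from the root up), and use Lemma~\ref{lemma:prov} to keep, at each level, only one sibling sub-nesting; the resulting proof is exactly one generated by the local/nesting/lift/deep schedule. For a rule instance $r$, let $d(r)$ be the depth of the position of its principal formula, and call $r$ \emph{local} when it is not inter-nested, \emph{creating} when it is inter-nested and the nestings that hold its auxiliary formulae are created by $r$ itself, and \emph{lifting} when it is inter-nested but those nestings already occur in its conclusion (in $\NS_\mLJ$, $\iimp_R^n$ is creating and $\lift^n$ is lifting). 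Order the kinds by $\mathsf{local}<\mathsf{creating}<\mathsf{lifting}$ and set $\mathrm{pr}(r)=(d(r),\mathrm{kind}(r))$, lexicographically ordered. The target is a proof in which, along every branch, one meets first all depth-$0$ instances --- local, then creating, then lifting --- and only afterwards instances of depth $\ge 1$, which by Lemma~\ref{lemma:prov} may be confined to a single sub-nesting in which the pattern recurs; this is precisely the schedule of the statement.

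The core step is that any \emph{adjacent} pair $r$ (just below) and $r'$ (just above) with $\mathrm{pr}(r)>\mathrm{pr}(r')$ can be swapped. By Definition~\ref{def:perm} this needs the pair not to be of the excluded shape, i.e.\ $r'$ must not operate on an auxiliary structure of $r$ (applicability of the rules after the swap is then routine --- $r'$ stays applicable to the conclusion of $r$ unless their principal formulae coincide, impossible since $r'$ is applied after $r$). The point is that the priorities are engineered so that every configuration in which $r'$ \emph{does} touch something $r$ produced, copied or created --- an auxiliary formula of $r$, or a nesting created by $r$ --- has $\mathrm{pr}(r)\le\mathrm{pr}(r')$, hence is already in order. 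Indeed, since $\NS$ is shallow n-directed, every auxiliary structure of $r$ sits at depth $d(r)$ when $r$ is local and at depth exactly $d(r)+1$ when $r$ is inter-nested, so if $r'$ met such a structure then some principal or auxiliary formula of $r'$ would lie at depth $\ge d(r)$; a short case analysis then forces either $d(r')>d(r)$, or $d(r')=d(r)$ with $r$ creating and $r'$ lifting into the nesting just created by $r$ --- in both cases $\mathrm{pr}(r)<\mathrm{pr}(r')$. Consequently, whenever $\mathrm{pr}(r)>\mathrm{pr}(r')$ the swap is licensed by full permutability.

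Iterating this swap until no adjacent inversion remains, and simultaneously pruning the deeper part of the proof to one sibling sub-nesting at each level via the re-ordering argument behind Lemma~\ref{lemma:prov} (which is exactly where n-directedness enters), puts $\pi$ into the target shape. Termination is by a well-founded measure on proofs, e.g.\ the multiset, taken over rule instances $s$, of the number of instances lying below $s$ with strictly larger priority, compared in the multiset extension of $<$ together with a secondary key on proof size to absorb the possible duplication of the upper rule across the premises of the lower one. Finally, scanning each branch from the root and truncating it at the first sequent that is an axiom instance, closing there, realises ``axioms applied eagerly'', and since this only deletes rules it preserves everything established; where the statement asks that \emph{all} applicable nesting-creating rules be fired, one additionally applies the missing ones, splicing in the subderivations furnished by their invertibility, which does not disturb the order.

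The step I expect to be the main obstacle is the bookkeeping just sketched: Definition~\ref{def:perm} deliberately says nothing about the case in which the upper rule touches an auxiliary structure of the lower one, so everything rests on verifying, uniformly for every shallow n-directed system, that this case coincides with ``already in order'' --- which is precisely where shallow n-directedness (auxiliary formulae landing at a determined position, one level down) and n-directedness (they never move up) are indispensable. Making the termination measure fully rigorous in the face of premise-duplication, and pinning down how much of Lemma~\ref{lemma:prov} the deep-phase recursion genuinely needs, are the remaining technical points.
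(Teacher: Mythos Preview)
Your proposal is correct and follows the same strategy as the paper: permute rules into local/nesting/lift order at the outermost level, then recurse into a single sub-nesting via Lemma~\ref{lemma:prov}. The paper's proof is a three-line sketch that simply asserts this reordering is licensed by full permutability and shallow n-directedness, so the priority function, the case analysis showing that the auxiliary-structure exclusion in Definition~\ref{def:perm} never blocks an out-of-order swap, and the termination measure you supply are precisely the details the paper omits; note only that ``all possible rules creating nestings'' should be read as those already present in $\pi$, so invertibility is not needed.
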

\begin{proof} 
Let $\pi$ be a proof of $\Gamma\seq\Delta,\Lambda$ in $\NS$. Since $\NS$ is fully permutable, $\pi$ can be re-organised so to apply first all local rules acting in $\Gamma,\Delta$ then to create nestings via rules in the nesting phase. At this point, leaves in the (open) derivation will have the form $\Gamma'\seq\Delta',\Lambda'$
with $\Lambda\subseteq\Lambda'$. Since $\NS$ is shallow n-directed, rules in the nesting phase cannot add formulae to the sequent contexts 
and
rules in the lift phase have principal formulae in $\Gamma',\Delta'$ and auxiliary formulae in $\Lambda'$. Hence no more rules in $\pi$ are applied over formulae in the sequent contexts and the proof must continue in the nestings with no interleaving 
of rules applied to different nestings (Lemma~\ref{lemma:prov}).\qed
\end{proof}
We will show in Sec.~\ref{sec:int}, \ref{sec:mm}, \ref{sec:nn} representative examples of  systems falling into the class of 
fully permutable, shallow n-directed nested systems. 
A quick remark is that fully permutability can be substituted by a weaker condition: local rules permute down the nesting ones. 
 
\section{Linear nested systems}\label{sec:lnested}
%!TEX root = IJCAR-18-ext.tex
In~\cite{DBLP:conf/lpar/LellmannOP17,DBLP:journals/corr/LellmannP17} we  proposed the concept of end-active, blocked form
 \emph{linear nested sequents}~\cite{Lellmann:2015lns}. 
In this section, we will show how such systems can both: be generated from 
shallow n-directed nested systems; and  recover
 sequent systems.
\begin{definition}
The set $LNS$ of \emph{linear nested sequents} is given recursively by:
\begin{enumerate}
  \item if  $\Gamma\seq \Delta$ is a sequent then $\Gamma\seq \Delta\in LNS$
  \item if $\Gamma\seq \Delta$ is a sequent and $\mathcal{G}\in LNS$
    then $\Gamma\seq\Delta\lns\mathcal{G}\in LNS$.
\end{enumerate}
%We will write $\mathcal{S}\{\Gamma\seq \Delta\}$ for denoting a
%\emph{context} $\mathcal{G}\lns\Gamma\seq \Delta\lns\mathcal{H}$ where
%each of $\mathcal{G},\mathcal{H}$ is a linear nested sequent or empty
%(omitting the $\lns$ symbol in the latter case). 
%We will represent by
%$\entails_{\mathcal{C}}\mathcal{G}$ the fact that $\mathcal{G}$ has a
%proof in the linear nested system $\mathcal{C}$. 
A {\em linear nested system  ($\LNS$)} consists of a set of inference rules acting on
linear nested sequents.
We call each sequent
in a linear nested sequent a {\em component}.

An application of a linear nested sequent rule is \emph{end-active}
if the rightmost components of the premises are active and the only
active components (in premise and conclusion) are the two rightmost
ones.
\end{definition}
In other words, a linear nested sequent 
is  a finite list of sequents. Since this data structure matches   a path in a nested tree,
Lemma~\ref{lemma:prov}  immediately entails that proofs in n-directed nested systems
can be linearised. 
This linearisation process can also be extended to nested rules. We will illustrate 
this formalisation for the shallow case.

Let $\NS$ be shallow n-directed. Observe that the depth first procedure assures that nested rules can be restricted so to be applied in the two deepest sub-nestings of a nesting. This implies 
that the nested sequents in conclusion and premises  in any rule have the form $\Upsilon\{\Gamma\seq\Delta,\Upsilon_1\{\,\},\ldots,\Upsilon_n\{\,\}\}$ with $\dpt{\Upsilon_n\{\,\}}\leq 1$ (see Def.~\ref{def:holecont}). After the local phase, rules do not alter the sequent contexts $\Gamma,\Delta$. Thus, after the lift phase, a sequent $\Gamma\seq\Delta,\Lambda_1,\ldots,\Lambda_n$ will be provable iff
$\seq\Lambda_1,\ldots,\Lambda_n$ is provable.
In view of Lemma~\ref{lemma:prov}, this holds iff $\seq\Lambda_i$ is provable, for some $i\in\{1,\ldots,n\}$. Hence the nesting phase can be restricted to the creation of a single nesting and nested contexts of the form $\Upsilon\{\Gamma\seq\Delta,\nes{\Gamma'\seq\Delta'}\}$ present on inference rules can be written as
the linear nested sequent 
$\Gscr\lns \Gamma\seq\Delta\lns\Gamma'\seq\Delta'$, where $\Gscr$ carries the 
position of the sequent $\Gamma'\seq\Delta'$ in $\Upsilon\{\,\}$.

Fig.~\ref{fig:lnsint} presents  the end-active system  $\LNS_{\mLJ}$~\cite{Lellmann:2015lns}, which is the linearisation of the system $\NS_{\mLJ}$.

\begin{figure}[t]
\[
\infer[{\iimp_L^l}]{\Gscr\lns\Gamma, A\iimp B \seq \Delta}{\Gscr\lns\Gamma,  A\iimp B
 \seq A, \Delta \quad \Gscr\lns\Gamma, B \seq
\Delta }
\quad
\infer[{\iimp_R^l}]{\Gscr\lns\Gamma \seq A \iimp B, \Delta}{\Gscr\lns\Gamma\seq\Delta\lns A
\seq B}
\quad
\infer[{\land_L^l}]{\Gscr\lns\Gamma, A \land B \seq \Delta}{\Gscr\lns\Gamma,  A,B \seq \Delta }
\]
\[
\infer[{\land_R^l}]{\Gscr\lns\Gamma \seq A \land B, \Delta}{\Gscr\lns\Gamma \seq
 A, \Delta \quad \Gamma \seq B, \Delta}
\quad
\infer[{\lor_L^l}]{\Gscr\lns\Gamma, A \lor B \seq  \Delta}{\Gscr\lns\Gamma, 
A, \seq \Delta \quad \Gscr\lns\Gamma, B \seq  \Delta}
\quad
\infer[{\lor_R^l}]{\Gscr\lns\Gamma \seq A \lor B, \Delta}{\Gscr\lns\Gamma \seq A,B, \Delta }
\]
\[
 \infer[{\lift^l}]{\Gscr\lns\Gamma,A \seq \Delta\lns \Gamma'\seq\Delta'}{\Gscr\lns\Gamma \seq \Delta\lns A,\Gamma'\seq\Delta'} 
\quad
\infer[{\bot_L^l}]{\Gscr\lns\Gamma,\bot \seq \Delta}{} 
 \quad
 \infer[{\init^l}]{\Gscr\lns\Gamma,A \seq A, \Delta}{} 
\]
\caption{End-active linear nested system $\LNS_{\mLJ}$.}\label{fig:lnsint}
\end{figure}

\subsubsection{Recovering sequent systems}
Observe that 
end-activeness in linear nested systems
alone is not enough for 
faithfully translating the depth first normalisation strategy  since the local  phase
can be interleaved with the lift and deep phases. For ensuring such strategy, we  add the auxiliary nesting operator $\lnse$\,  to the premises of rules in the nesting phase,
which is switched to the the original nesting operator $\lns$\, after the application of rules in the lift phase. This assures that rules in nesting and lift phases are restricted so that to  occur  in a block~\cite{DBLP:conf/lpar/LellmannOP17,DBLP:journals/corr/LellmannP17}. 

This procedure determine that the rules applied in the nesting + lift phases can be seen as a single {\em macro rule},  introducing a {\em synthetic connective}~\cite{DBLP:conf/fossacs/ChaudhuriMS16}.
%\begin{theorem}\label{thm:blocked}
%Every blocked form,  end-active  linear nested system corresponds to a sequent calculus system.  \ednote{Not sure about  the statement of the result or the proof :-/ Because actually lift rules can be applied many times...}
%\end{theorem}
%\begin{proof}
%The nested rules acting only in the last component are sequent rules.
%And a block of rules from the nesting ($nr$) and lift phases ($lr$)
%$$
%\infer[nr]{\Gscr\lns \Gamma\seq\Delta}
%{\infer=[lr]{\Gscr\lns \Gamma'\seq\Delta'\lnse\Gamma_n\vdash\Delta_n}
%{\Gscr\lns \Gamma''\seq\Delta''\lns\Gamma_l\vdash\Delta_l}}
%$$
%give rise to a {\em macro  linear nested rule} ($mr$) in $\LNS$
%$$
%\infer[mr]{\Gscr\lns \Gamma\seq\Delta}
%{\Gscr\lns \Gamma''\seq\Delta''\lns\Gamma_l\vdash\Delta_l}
%$$
%that is translated as the sequent rule ($sr$)
%$$
%\infer[sr]{\Gamma\seq\Delta}
%{\Gamma_l\vdash\Delta_l}
%$$
%\end{proof}
That means that, modulo the order of instances of applications of  rules in the lift phase, there is a 1-1 correspondence between
derivations in the blocked form,  end-active variant of  $\LNS$   and derivations in  
the derived $\SC$.

In Fig.~\ref{fig:blocked}
we present a {\em blocked form} system for $\LNS_{\mLJ}$. Observe that
applying $\iimp_R^b$ then $\lift^b$ (so that to transfer all the formulae in the left context to the last component) then $\cls^b$ gives the rule $\iimp_R$ in Fig.~\ref{fig:mLJ}.
\begin{figure}[t]
$\qquad\qquad
\infer[{\iimp_R^b}]{\Gscr\lns\Gamma \seq A \iimp B, \Delta}{\Gscr\lns\Gamma\seq\Delta\lnse A
\seq B}
\qquad
 \infer[{\lift^b}]{\Gscr\lns\Gamma,A \seq \Delta\lnse \Gamma'\seq\Delta'}{\Gscr\lns\Gamma \seq \Delta\lnse A,\Gamma'\seq\Delta'} 
 \qquad
 \infer[\cls^b]{\mathcal{G} \lnse \Gamma\seq \Delta}{\mathcal{G}
    \lns \Gamma\seq \Delta }
$
\caption{Blocked form version for $\LNS_{\mLJ}$.}\label{fig:blocked}
\end{figure}

\section{Labelled proof systems}
\label{sec:lbns}
%!TEX root = IJCAR-18-ext.tex

While it is widely accepted that nested systems carry  the Kripke structure on nestings for intuitionistic and normal modal logics, it is not clear what is the relationship between nestings and semantics for other systems. For example, in~\cite{DBLP:conf/lpar/LellmannOP17} we presented a $\LNS$ for linear logic ($\LL$), but the interpretation of nestings for this case is still an open problem.

In this work we will relate labelled nested systems~\cite{DBLP:conf/aiml/GoreR12}
with labelled systems~\cite{DBLP:books/daglib/0003059}. While the results for 
intuitionistic and modal logics are not new~\cite{Fitting:2014,Negri:2005p878}, we present the first semantical interpretation for nestings in non-normal modal logics. In this section we shall recall some of the terminology for labelled systems.
\subsubsection{Labelled nested systems}\label{subsec:lbns}
Let $\sv$ a countable infinite set of {\em state variables} (denoted by $x, y, z,
\ldots$), disjoint from the set of propositional variables. A {\em labelled formula} has the form $x : A$ where $x\in\sv $ and $A$ is a formula. 
If $\Gamma = \{A_1,\ldots, A_n\}$ is a multiset of formulae, then $x :\Gamma$ denotes the multiset $\{x : A_1,\ldots , x : A_n\}$ of labelled formulae. 
A (possibly empty) set of relation terms (\ie\ terms
of the form $xRy$, where $x,y\in\sv$) is called a {\em relation set}. For a relation set
$\Rscr$, the \emph{frame} $Fr(\Rscr)$ defined by $\Rscr$ is given by
$(|\Rscr|,\Rscr)$ where $|\Rscr| = \{x\; |\; xRy \in \Rscr \mbox{ or }
yRx \in \Rscr \mbox{ for some }y\in\sv\}$.  We say that a relation set
$\Rscr$ is {\em treelike} if the frame defined by $\Rscr$ is a tree or
$\Rscr$ is empty. A treelike relation set $\Rscr$ is called {\em
  linelike} if each node in $\Rscr$ has at most one child.
\begin{definition}\label{def:lns} 
  A {\em labelled nested sequent} $\LLNS$ is a labelled sequent
  $\Rscr,X\seq Y$ where
\begin{enumerate}
\item $\Rscr$ is treelike;
\item if $\Rscr=\emptyset$ then $X$ has the form $x\lab
  A_1,\ldots,x\lab A_n$ and $Y$ has the form $x \lab B_1,\ldots, x
  \lab B_m$ for some  $x\in\sv$;
\item if $\Rscr\not =\emptyset$ then every state variable $y$ that
  occurs in either $X$ or $Y$ also occurs in $\Rscr$.
\end{enumerate}
  A {\em labelled nested sequent calculus} 
is a labelled 
calculus whose initial sequents and inference rules are constructed
  from $\LLNS$. 
\end{definition}

 As in~\cite{DBLP:conf/aiml/GoreR12} labelled nested systems can be automatically generated from nested systems. 
\begin{definition}\label{def:Tx}
Given $\Gamma\seq\Delta$ and $\Gamma'\seq\Delta'$ sequents, 
we define $(\Gamma\seq\Delta)\tensor(\Gamma'\seq\Delta')$
to be $\Gamma,\Gamma'\seq\Delta,\Delta'$.
For a state variable $x$, define the mapping $\mathbb{TL}_{x}$ from
 $\NS$ to $\LLS$
as follows
$$
\begin{array}{lcl}
\mathbb{TL}_{x}(\Gamma \seq \Delta,\nes{\Upsilon_1},\ldots,\nes{\Upsilon_n}) &=&
xRx_1,\ldots, xRx_n, 
( x\lab \Gamma \seq x\lab \Delta)\;\tensor\\
& &\mathbb{TL}_{x_1}(\Upsilon_1)\tensor\ldots \tensor
\mathbb{TL}_{x_n}(\Upsilon_n)\\
\mathbb{TL}_{x}(\nes{\Gamma \seq \Delta})&=&x\lab \Gamma \seq x\lab \Delta\\
\end{array}
$$
with all state variables pairwise distinct.
\end{definition}
For the sake of readability, when the state variable is not important, we will suppress the subscript, writing $\mathbb{TL}$ instead of $\mathbb{TL}_{x}$ .
We will shortly illustrate the procedure of constructing labelled nestings using the mapping $\mathbb{TL}$.
Consider the following application of the rule $\iimp_R$ of % in
Fig.~\ref{fig:nestedmLJ}:
$$
\infer[\iimp_R^n]{\Upsilon\{\Gamma\seq\Delta, A \iimp B,\Lambda\}}{\Upsilon\{\Gamma\seq\Delta,\Lambda,\nes{A\seq B}\}}
$$
Applying $\mathbb{TL}$ to the conclusion we obtain 
$\Rscr,X\seq Y,x\lab A\iimp B$ where the variable $x$ label 
formulae in two components of the $\NS$, and $X,Y$ are multisets of labelled formulae.
Applying $\mathbb{TL}$ to the
premise we obtain 
$\Rscr,xRy,X,y\lab A\seq Y,y\lab B$
where $y$ is a fresh variable (\ie\ different from $x$ and not occurring in $X,Y$).
We thus obtain an application of the $\LLS$ rule
\[\infer[\mathbb{TL}(\iimp_R^n)]{\Rscr,X\seq Y,x\lab A\iimp B}{\Rscr,xRy,X,y\lab A\seq Y,y\lab B}\]
Some rules of the labelled nested system $\LLNS_\mLJ$ are depicted in Fig.~\ref{fig:lab-nes-mLJ}.

\begin{figure}[t]
\[\infer[{\mathbb{TL}({\iimp_L^n})}]{\Rscr,X, x\lab A\iimp B \seq Y}{\Rscr,X,
 \seq x\lab A, Y \quad X, x\lab B \seq
Y }
\qquad
\infer[\mathbb{TL}({\land_L^n})]{\Rscr,X, x\lab A \land B \seq Y}{\Rscr,X,  x\lab A,x\lab B \seq Y }
\]
\[
\infer[\mathbb{TL}({\land_R^n})]{\Rscr,X \seq x\lab A \land B, Y}{\Rscr,X \seq
 x\lab A, Y \quad X \seq x\lab B, Y}
\qquad
\infer[\mathbb{TL}({\lor_L^n})]{\Rscr,X, x\lab A \lor B \seq  Y}{\Rscr,X, 
x\lab A, \seq Y \quad X, x\lab B \seq  Y}
\]
\[
\infer[\mathbb{TL}({\lor_R^n})]{\Rscr,X \seq x\lab A \lor x\lab B, Y}{\Rscr,X \seq x\lab A,x\lab B, Y }
\qquad
 \infer[{\mathbb{TL}(\init^n)}]{\Rscr,X,x\lab A \seq x\lab A, Y}{} 
\]
\[
\infer[\mathbb{TL}({\bot_L^n})]{\Rscr,X,x\lab \bot \seq Y}{} 
\quad
\infer[\mathbb{TL}(\iimp_R^n)]{\Rscr,X\seq Y,x\lab A\iimp B}{\Rscr,xRy,X,y\lab A\seq Y,y\lab B}
\quad
\infer[{\mathbb{TL}(\lift^n)}]{\Rscr,xRy,X, x\lab A\seq Y}{\Rscr,xRy,X, y\lab A\seq Y}
\]
\caption{Labelled nested system $\LLNS_{\mLJ}$.}\label{fig:lab-nes-mLJ}
\end{figure}
The following result follows readily by transforming derivations bottom-up~\cite{DBLP:conf/aiml/GoreR12}.
\begin{theorem}\label{trans}
The mapping $\mathbb{TL}_{x}$ preserves open derivations, that is, there is a 1-1 correspondence between derivations in 
a nested sequent system $\NS$ and in its labelled translation $\LLNS$.
\end{theorem}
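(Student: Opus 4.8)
The plan is to establish the bijection by showing that $\mathbb{TL}_{x}$ acts as a simulation in both directions, working rule-by-rule and tracking the invariant that $\mathbb{TL}_{x}(\Upsilon)$ is always a well-formed labelled nested sequent in the sense of Definition~\ref{def:lns}. First I would fix a state variable $x$ and note that, by induction on the construction of nested sequents in Definition~\ref{def:nested}, $\mathbb{TL}_{x}$ sends a nested sequent $\Upsilon$ to a labelled sequent $\Rscr, X \seq Y$ whose relation set $\Rscr$ is treelike (indeed it \emph{is} the nesting tree of $\Upsilon$, with nodes renamed by fresh state variables), and where each labelled formula $y \lab C$ corresponds to an occurrence of $C$ in the component of $\Upsilon$ at the position named by $y$. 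This correspondence of occurrences is what makes $\id_{\Upsilon}(\cdot)$ agree with the frame $Fr(\Rscr)$, and it is the backbone of the whole argument. Since all state variables introduced along the recursion are pairwise distinct, $\mathbb{TL}_{x}$ is injective on occurrences, so no collapsing happens.

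Next I would verify the forward simulation: for each rule $r$ of $\NS$, with conclusion $\Upsilon\{\cdots\}$ and premises $\Upsilon_1,\ldots,\Upsilon_k$, applying $\mathbb{TL}_{x}$ to all of them yields exactly an instance of the rule $\mathbb{TL}(r)$ as generated by the recipe illustrated for $\iimp_R^n$ (and tabulated in Fig.~\ref{fig:lab-nes-mLJ}). The only subtlety is the treatment of freshness: a rule that creates a nesting (the right rule for a modality, $\iimp_R^n$, etc.) introduces, under $\mathbb{TL}$, a fresh state variable $y$ with a new relational atom $xRy$ — and freshness of $y$ in the labelled premise is forced precisely because the new position created by $r$ did not occur in $\Upsilon$. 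Conversely, for the backward simulation I would argue that any $\LLNS$ derivation of $\mathbb{TL}_{x}(\Upsilon)$ uses only rules of the form $\mathbb{TL}(r)$, and since $\mathbb{TL}_{x}$ is a bijection between occurrences in $\Upsilon$ and labelled formulae/relational atoms in $\mathbb{TL}_{x}(\Upsilon)$, each such rule instance is the $\mathbb{TL}$-image of a unique $\NS$ rule instance with principal occurrence determined by the labelled principal formula; reading the $\LLNS$ derivation bottom-up and applying $\mathbb{TL}_{x}^{-1}$ position-wise reconstructs an $\NS$ derivation. Doing this for open derivations (derivations with unproved leaves) is immediate since the translation is compositional and leaf-for-leaf: $\mathbb{TL}_{x}$ maps leaves to leaves and the correspondence of leaves descends to the sub-derivations.

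Putting these together gives the claimed 1-1 correspondence, and I would close by remarking that it is height-preserving, since each $\NS$ rule corresponds to exactly one $\mathbb{TL}(r)$ step (no administrative steps are inserted). The main obstacle I anticipate is not the rule-by-rule bookkeeping but making the invariant about $\LLNS$ well-formedness airtight across the \emph{nesting-creating} rules: one must check that after introducing $xRy$ the resulting relation set is still treelike, that clause~(3) of Definition~\ref{def:lns} is maintained (every state variable in $X,Y$ occurs in $\Rscr$), and — for the backward direction — that an arbitrary $\LLNS$ derivation cannot ``invent'' a relational atom that has no counterpart in $\Upsilon$, which relies on the generated systems $\LLNS$ having no structural rules on $\Rscr$ beyond those coming from nested rules. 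Once the occurrence-level bijection and this well-formedness invariant are nailed down, the theorem follows by a routine induction on derivation height, exactly as in~\cite{DBLP:conf/aiml/GoreR12}.
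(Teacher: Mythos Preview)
Your proposal is correct and takes essentially the same approach as the paper: the paper's own proof is simply the one-line remark that the result ``follows readily by transforming derivations bottom-up'' together with the citation to~\cite{DBLP:conf/aiml/GoreR12}, and your sketch is precisely a fleshed-out version of that bottom-up rule-by-rule simulation (you even invoke the same reference). The well-formedness invariant and the freshness bookkeeping you highlight are exactly the points one must check, so nothing is missing.
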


%\subsubsection{Labelled line systems}\label{sec:llns}
%\input{llns}

\subsubsection{Labelled sequent systems}\label{sec:labelsec}
%!TEX root = IJCAR-18-ext.tex
In the labelled sequent framework, a semantical characterisation of a logic  is explicitly added to sequents via the labelling of formulae~\cite{DBLP:journals/jphil/Mints97,DBLP:books/daglib/0003059,DBLP:journals/aml/DyckhoffN12,Negri:2005p878,Negri2017}.  
In the case of world based semantics, the forcing relation $x\Vdash A$  is represented as the labelled formula $x \lab A$ and sequents have the form $\Rscr,X\seq Y$, where $\mathcal{R}$ is a relation set and $X,Y$ are multisets of labelled
  formulae.

The rules of the labelled calculus $\GtI$  are obtained from the inductive definition of validity in a Kripke frame (Fig.~\ref{fig:LS-mLJ}), together with the rules describing a partial order, 
presented in Fig.~\ref{fig:relational}. 
Note that the anti-symmetry rule does not need to be stated directly since, for any $x$, the formula $x=x$ is equivalent to true and hence can be erased from the left side of a sequent.

\begin{figure}[t]
\begin{subfigure}{.65\linewidth}
$
\begin{array}{c}
\infer[{\iimp_L^t}]{\Rscr,x\leq y, X, x\lab A\iimp B \seq Y}{\Rscr,x\leq y, X, x\lab A\iimp B
 \seq y\lab A, Y \quad \Rscr,x\leq y, X, y\lab B \seq
Y }
\\\\
\infer[\iimp_R^t]{\Rscr,X\seq Y,x\lab A\iimp B}{\Rscr,x\leq y,X,y\lab A\seq Y,y\lab B}\quad
\infer[\init^t]{\Rscr,X,x\leq y, x\lab P \seq Y,y\lab P}{} 
%\qquad
%\infer[{\land_L}]{\Rscr,X, x\lab A \land B \seq Y}{\Rscr,X,  x\lab A,x\lab B \seq Y }
%\qquad
%\infer[{\land_R}]{\Rscr,X \seq x\lab A \land B, Y}{\Rscr,X \seq
% x\lab A, Y \quad \Rscr,X \seq x\lab B, Y}
%\qquad
%\infer[{\lor_L}]{\Rscr,X, x\lab A \lor B \seq  Y}{\Rscr,X, 
%x\lab A, \seq Y \quad \Rscr,X, x\lab B \seq  Y}
%\qquad
%\infer[{\lor_R}]{\Rscr,X \seq x\lab A \lor B, Y}{\Rscr,X \seq x\lab A,x\lab B, Y }
%\qquad
%\infer[{\bot_L}]{\Rscr,X,x\lab \bot \seq Y}{} 
\end{array}
$
\caption{$y$ is fresh in  $\iimp_R$  and $P$ is atomic in $\init$.}\label{fig:LS-mLJ}
\end{subfigure}
\begin{subfigure}{.3\linewidth}
$\begin{array}{cc}
\infer[\text{Ref}]{\Rscr,X \seq Y}{xRx, \Rscr,X \seq Y} \\\\
  \infer[\text{Trans}]{xRy, yRz, \Rscr,X \seq Y}{xRz, xRy, yRz, \Rscr,X \seq Y}  \end{array}
$
\caption{Relation rules.}\label{fig:relational}
\end{subfigure}
\caption{Some rules of the labelled system $\GtI$}
\end{figure}

\section{Intuitionistic logic}\label{sec:int}
%!TEX root = IJCAR-18.tex

In this section we will give a tour on various proof systems for intuitionistic logic, relating them by applying the results presented in the last sections.

\begin{theorem}\label{prop:inv-perm}
Weakening is height-preserving admissible in $\NS_{\mLJ}$.
Moreover, all introduction rules in $\NS_{\mLJ}$ are 
%m-invertible, 
invertible where both the height of the derivation and the minimal level of the active components of rule
applications are preserved. 
%That is, for every $n \geq 1$ we have:
%\begin{enumerate}
%\item If
%    $\Upsilon \con{\Gamma,
%      (A\iimp B)^n \seq \Delta,\Lambda}$, then also
%    $\Upsilon \con{\Gamma
%      \seq A^{n+k}, \Delta,\Lambda}$ and $\Upsilon\con{\Gamma,B^{n+l}
%      \seq\Delta,\Lambda}$ for some $k,l \geq 0$.
%      \item If
%    $\Upsilon \con{\Gamma,
%      \seq \Delta, (A\iimp B)^n,\Lambda}$, then also
%    $\Upsilon \con{\Gamma
%      \seq  \Delta,[A\vdash B]^{n+k},\Lambda}$ for some $k \geq 0$.
%      \item If
%    $\Upsilon \con{\Gamma,
%      (A\vee B)^n \seq \Delta,\Lambda}$, then also
%    $\Upsilon \con{\Gamma,A^{n+k}
%      \seq  \Delta,\Lambda}$ and $\Upsilon\con{\Gamma,B^{n+l}
%      \seq\Delta,\Lambda}$ for some $k,l \geq 0$.
%        \item If
%    $\Upsilon \con{\Gamma
%      \seq \Delta, (A\vee B)^n,\Lambda}$, then also
%    $\Upsilon \con{\Gamma
%      \seq  \Delta,A^{n+k},B^{n+l},\Lambda}$ for some $k \geq 0$.
%         \item If
%    $\Upsilon \con{\Gamma,(A\wedge B)^n
%      \seq \Delta, \Lambda}$, then also
%    $\Upsilon \con{\Gamma,A^{n+k},B^{n+l}
%      \seq  \Delta,\Lambda}$ for some $k \geq 0$.
% \item If
%    $\Upsilon \con{\Gamma
%      \seq \Delta, (A\wedge B)^n,\Lambda}$, then also
%     $\Upsilon \con{\Gamma
%      \seq  \Delta,A^{n+k},\Lambda}$ and $\Upsilon\con{\Gamma
%      \seq\Delta,B^{n+l},\Lambda}$ for some $k,l \geq 0$.
%\end{enumerate}
Finally,  
$\NS_{\mLJ}$ is fully permutable.
\end{theorem}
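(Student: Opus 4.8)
The plan is to prove the three assertions in sequence, each by a structural induction, with the later parts leaning on the earlier ones.

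First I would show that weakening is height-preserving admissible: if $\Upsilon\{\Gamma\seq\Delta,\Lambda\}$ has a derivation of height $h$, then $\Upsilon\{\Gamma,\Sigma\seq\Delta,\Pi,\Lambda,\Lambda'\}$ has one of height at most $h$, for arbitrary multisets of formulae $\Sigma,\Pi$ and multiset of nestings $\Lambda'$ inserted at an arbitrary position of the tree. The argument is by induction on $h$. In the base cases the conclusion is still an instance of $\init^n$ or $\bot_L^n$ after the extra material is added. In the inductive step one inspects the last rule: since every rule of Fig.~\ref{fig:nestedmLJ} is formulated with a holed context $\Upsilon\{\,\}$ and mentions only the displayed formulae, the weakened material is absorbed into the context (or into $\Gamma,\Delta,\Lambda$), the rule is reapplied, and the induction hypothesis is invoked on the premise(s). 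The only cases worth a remark are $\iimp_R^n$, which opens a fresh nesting $\nes{A\seq B}$ at a new position that the weakening cannot disturb, and $\lift^n$, where a weakening formula added to the source component is simply left behind when the lift is performed; both go through.

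Next I would establish invertibility. For each introduction rule $r$ with conclusion $C$ and premises $P_1,\dots,P_k$, I prove by induction on the height of a derivation $\pi$ of $C$ that every $P_i$ is derivable with height at most $\hd{\pi}$ and by a derivation no rule application of which acts at a component shallower than the minimal level occurring in $\pi$. The rules $\land_L^n,\land_R^n,\lor_L^n,\lor_R^n$ are handled uniformly: if the principal formula of $r$ is principal in the last rule of $\pi$, the needed premise derivations are immediate; otherwise the last rule of $\pi$ commutes over $r$ and the induction hypothesis applies to its premises. For $\iimp_L^n$ the left premise follows directly from height-preserving weakening and the right premise from the same induction. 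The principal case is $\iimp_R^n$: if $A\iimp B$ is not principal in the last rule $s$ of $\pi$, then $s$ touches neither the position where the new nesting is to be created nor $A$ or $B$, so $s$ commutes with the inversion and we recurse on its premises; if $A\iimp B$ is principal in $s$, then $s=\iimp_R^n$ and its premise is precisely $P_1$. The level invariant is maintained throughout because neither weakening nor the commutations introduce a rule application at a component shallower than one already present, and $\iimp_R^n$ acts at the very component of its own conclusion.

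Finally, for full permutability I would argue that whenever $r_1$ is applied to $S$ and $r_2$ to a premise of $r_1$ with $r_2$ not acting on the auxiliary structures of $r_1$, the principal formula or principal nesting of $r_2$ already occurs in $S$ in a part of the tree untouched by $r_1$; hence $r_2$ may be applied to $S$ first, after which the principal structure of $r_1$ is still present in each premise of $r_2$ and $r_1$ can be reapplied, the resulting leaves coinciding with the original ones up to the multiplicity and placement of auxiliary material, which is reconciled using height-preserving weakening together with the invertibility just proved. A short case inspection shows there are no genuine overlaps: the only rules moving material between components are $\lift^n$ and $\iimp_R^n$, and the pairs $(\lift^n,\lift^n)$, $(\lift^n,\iimp_R^n)$ and $(\iimp_R^n,r)$ for any $r$ are checked directly, using that $\iimp_R^n$ always creates a fresh position. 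The main obstacle throughout is exactly this bookkeeping around the two rules acting on the nesting tree — permuting an inverted rule upward past rules operating deep inside a nesting while keeping the minimal active level under control, and checking that transferring a formula into a nesting never blocks a commutation — which is where the clause on the minimal level of active components earns its keep and where a careful but entirely routine case analysis is unavoidable.
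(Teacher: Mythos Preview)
Your proposal is correct and follows essentially the same approach as the paper: induction on the height of the derivation with a case analysis on the last rule for weakening and invertibility, then a case-by-case analysis for permutability that appeals to the previously established weakening and invertibility results. The paper's own proof is in fact considerably terser than yours, and the remark following it singles out precisely the $\lift^n$/$\iimp_R^n$ interaction you flag, observing that it falls outside Definition~\ref{def:perm} since the nesting created by $\iimp_R^n$ is an auxiliary structure.
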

\begin{proof}
The proofs of weakening-admissibility and invertibility are by induction on the depth of the derivation, distinguishing cases according to the last applied rule.
Permutability of rules is proven by a case-by-case analysis and it uses the 
invertibility and weakening results. \qed
\end{proof}
\begin{remark}
Observe that permutability also holds for the $\lift$ rule. In fact the case
$$
\infer[\iimp_R]{A,\Gamma\seq,\Delta,\Lambda,B\iimp C}
{\infer[\lift]{A,\Gamma\seq,\Delta,\Lambda,\nes{B\seq C}}
{A,\Gamma\seq,\Delta,\Lambda,\nes{A,B\seq C}}}
$$
is not considered for permutation since $\nes{B\seq C}$ is an auxiliary structure 
of the principal formula $B\iimp C$ (see Def.~\ref{def:perm}).
\end{remark}
Since $\NS_{\mLJ}$ is shallow n-directed,  the depth first strategy holds 
with the following classification of rules: 
 {\em local phase:} conjunction, disjunction 
and implication left
rules;
 {\em nesting phase:} implication right rule;
{\em lift phase:} $\lift$ rule.

The results in the previous sections entail the following.
\begin{theorem}\label{prop:equiv1}
Systems $\NS_{\mLJ}$, $\LNS_{\mLJ}$, $\mLJ$ and $\LLNS_{\mLJ}$  are equivalent.
\end{theorem}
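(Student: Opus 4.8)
The statement should be read as: a sequent $\Gamma\seq\Delta$ is provable in $\mLJ$ (the system $\SC_\mLJ$ of Fig.~\ref{fig:mLJ}) iff it is provable in $\NS_{\mLJ}$ as a one-component nested sequent, iff the corresponding one-component linear nested sequent is provable in $\LNS_{\mLJ}$, iff its $\mathbb{TL}_x$-image $x\lab\Gamma\seq x\lab\Delta$ is provable in $\LLNS_{\mLJ}$. The plan is to prove this by chaining the three pairwise equivalences $\LLNS_{\mLJ}\equiv\NS_{\mLJ}$, $\NS_{\mLJ}\equiv\LNS_{\mLJ}$ and $\LNS_{\mLJ}\equiv\mLJ$, each being an instance of a general result from the previous sections applied to the intuitionistic rule set. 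The only hypotheses needed are that $\NS_{\mLJ}$ is fully permutable --- this is Theorem~\ref{prop:inv-perm} --- and shallow n-directed, with the local/nesting/lift classification of its rules recorded immediately before the present statement.

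The equivalence $\LLNS_{\mLJ}\equiv\NS_{\mLJ}$ is immediate from Theorem~\ref{trans}: $\LLNS_{\mLJ}$ is by construction the $\mathbb{TL}$-image of $\NS_{\mLJ}$ (Fig.~\ref{fig:lab-nes-mLJ}), so $\mathbb{TL}_x$ is a bijection between derivations of the two systems and in particular preserves provability in both directions. For $\NS_{\mLJ}\equiv\LNS_{\mLJ}$, the direction $\LNS_{\mLJ}\Rightarrow\NS_{\mLJ}$ is trivial, since a linear nested sequent is a path-shaped nested sequent and every rule of $\LNS_{\mLJ}$ (Fig.~\ref{fig:lnsint}) is the restriction of the corresponding rule of $\NS_{\mLJ}$ to that shape. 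For $\NS_{\mLJ}\Rightarrow\LNS_{\mLJ}$, take a proof $\pi$ in $\NS_{\mLJ}$: full permutability and shallow n-directedness let Theorem~\ref{thm:norm} put $\pi$ into depth-first normal form; by Lemma~\ref{lemma:prov} the comma on nestings is intuitionistic, so the nesting phase can be confined to creating a single nesting, and --- exactly along the lines of Section~\ref{sec:lnested} --- the resulting normal proof is a $\LNS_{\mLJ}$ derivation, the prefix $\Gscr$ recording the position of the active component.

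It remains to prove $\LNS_{\mLJ}\equiv\mLJ$. Here I would pass to the end-active, blocked-form variant of $\LNS_{\mLJ}$ of Fig.~\ref{fig:blocked}: end-activeness guarantees that only the two rightmost components are ever active, so the prefix $\Gscr$ stays inert, while the auxiliary operator $\lnse$ forces each use of $\iimp_R^b$ to be followed by a maximal run of $\lift^b$ and then one $\cls^b$ --- and, as already observed in the text, this block is precisely the macro-rule $\iimp_R$ of $\SC_\mLJ$, transferring the whole left context into the new component. Every other rule of $\LNS_{\mLJ}$ acts within the last component alone and never consults $\Gscr$, so after erasing the inert prefix these are literally the rules of $\SC_\mLJ$ in Fig.~\ref{fig:mLJ}. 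Reading the correspondence both ways --- collapsing each block to one $\iimp_R$ step, or dually expanding each $\iimp_R$ step into a block --- gives a bijection between blocked-form $\LNS_{\mLJ}$ derivations and $\mLJ$ derivations, modulo the order of the $\lift^b$ instances inside a block. Composing the three equivalences yields the claim.

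The main obstacle is the completeness of the block discipline in the last step: one must be certain that restricting the nesting phase to a single nesting, and requiring nesting and lift to occur together inside a block, never makes a provable sequent unprovable. This is exactly what Theorem~\ref{thm:norm} and Lemma~\ref{lemma:prov} buy us for $\NS_{\mLJ}$, transported through the linearisation of Section~\ref{sec:lnested}; the remaining work is the routine verification that a local rule of $\LNS_{\mLJ}$ never reads its prefix, so that every block-free step is, up to erasing $\Gscr$, an instance of $\SC_\mLJ$. The soundness directions of all three links are straightforward rule-by-rule simulations, invoking Theorem~\ref{prop:inv-perm} only to justify the permutations used in the normalisation.
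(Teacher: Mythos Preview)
Your proposal is correct and follows essentially the same approach as the paper: the paper simply states that the theorem is entailed by the results of the previous sections and remarks that the proof uses syntactical arguments only, and you have spelled out precisely those arguments --- invoking Theorem~\ref{trans} for $\LLNS_{\mLJ}\equiv\NS_{\mLJ}$, Theorem~\ref{prop:inv-perm} together with Theorem~\ref{thm:norm} and Lemma~\ref{lemma:prov} for $\NS_{\mLJ}\equiv\LNS_{\mLJ}$, and the blocked-form macro-rule correspondence of Section~\ref{sec:lnested} for $\LNS_{\mLJ}\equiv\mLJ$.
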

Observe that the proof uses syntactical arguments only, differently from \eg~\cite{Fitting:2014,Lellmann:2015lns}. 

For establishing a comparison between labels in $\GtI$ and $\LLNS_{\mLJ}$, first observe that
 applications of rule Trans in $\GtI$ can be restricted to the leaves (\ie\ just before an instance of the initial axiom).
Also, since weakening is admissible in $\GtI$ and 
 the monotonicity property:
$x \Vdash A\mbox{ and } x\leq y\mbox{ implies }y\Vdash A$ is derivable in $\GtI$ (Lemma 4.1 in~\cite{DBLP:journals/aml/DyckhoffN12}), the next result follows.
\begin{lemma}
The following  
rules are {\em derivable} in $\GtI$ up to weakening.
$$
\infer[{\iimp_{L'}}]{\Rscr, X, x\lab A\iimp B \seq Y}{\Rscr, X, x\lab A\iimp B
 \seq x\lab A, Y \quad \Rscr, X, x\lab B \seq
Y }
\qquad
\infer[\init']{\Rscr,X,x\lab P \seq Y,x\lab P}{} 
$$
Moreover, the rule
$$
\infer[\lift']{\Rscr,x\leq y,X, x\lab A\seq Y}{\Rscr,x\leq y,X, y\lab A\seq Y}
$$
is admissible in $\GtI$. 
\end{lemma}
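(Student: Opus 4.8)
**Proof plan for the Lemma (derivability of $\iimp_{L'}$, $\init'$ up to weakening and admissibility of $\lift'$ in $\GtI$).**

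The plan is to derive each of the three rules by combining the explicitly stated structural facts about $\GtI$: height-preserving admissibility of weakening, the derivability of the monotonicity property ($x\Vdash A$ and $x\leq y$ imply $y\Vdash A$, i.e.\ Lemma~4.1 of~\cite{DBLP:journals/aml/DyckhoffN12}), and the observation that instances of Trans can be pushed to the leaves. First I would treat $\init'$: this is just the special case $x=y$ of $\init^t$. Since antisymmetry/reflexivity collapse $x\leq x$ to true (as remarked after Fig.~\ref{fig:relational}), an instance of Ref followed by $\init^t$ with $y:=x$ yields $\Rscr,X,x\lab P\seq Y,x\lab P$; hence $\init'$ is derivable. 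For $\iimp_{L'}$, I would start from $\iimp_L^t$ applied with $y:=x$; after a Ref step this introduces the relational atom $x\leq x$, and the two premises become $\Rscr,X,x\lab A\iimp B\seq x\lab A,Y$ and $\Rscr,X,x\lab B\seq Y$ (the redundant $x\leq x$ erased), which are exactly the premises of $\iimp_{L'}$. The qualifier ``up to weakening'' absorbs any extra relational atoms or the bookkeeping of the erased $x\leq x$.

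For the admissible rule $\lift'$, the argument is by induction on the height of the derivation of the premise $\Rscr,x\leq y,X,y\lab A\seq Y$, with a case analysis on the last rule. The key point is that $x\lab A$ is \emph{weaker} than $y\lab A$ in the presence of $x\leq y$: by the derivable monotonicity property, from $x\lab A$ together with $x\leq y$ one obtains $y\lab A$, so anything provable with $y\lab A$ in the antecedent is provable with $x\lab A$ in the antecedent (plus $x\leq y$). Concretely, in the base case $y\lab A$ is either inert (then weakening gives the conclusion directly) or it is principal in $\init^t$, matched against some $y'\lab A$ with $y\leq y'$ on the right; then $x\leq y, y\leq y'$ and Trans (pushed to the leaf) give $x\leq y'$, so $\init^t$ applies directly with $x\lab A$. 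In the inductive step, if the last rule does not touch $y\lab A$, apply the IH to the premises and re-apply the rule; if $y\lab A$ is principal (e.g.\ it is $y\lab B\iimp C$ and $\iimp_L^t$ acts on it using some $y\leq z$), then from $x\leq y$ and $y\leq z$, Trans yields $x\leq z$, so the same rule instance applies with principal formula $x\lab B\iimp C$, and the premises are handled by the IH.

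I expect the main obstacle to be the $\lift'$ induction, specifically the $\iimp_L^t$ (and $\iimp_R^t$, where a fresh variable is spawned) cases: one must check that replacing the label $y$ by the $\leq$-smaller label $x$ does not break the freshness side-condition of $\iimp_R^t$ and that the required relational atoms ($x\leq z$ from $x\leq y$ and $y\leq z$) are indeed available after pushing Trans to the leaves. Everything else — the conjunction, disjunction, $\bot_L^t$ and relational cases — is routine permutation of the substitution past a rule that does not interact with the affected formula, handled uniformly by the induction hypothesis together with height-preserving weakening.
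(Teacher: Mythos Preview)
Your proposal is correct and matches the paper's approach: for $\iimp_{L'}$ and $\init'$ you use exactly the paper's derivations (apply $\mathrm{Ref}$ to introduce $x\leq x$, then instantiate $\iimp_L^t$ resp.\ $\init^t$ with $y:=x$), and for $\lift'$ you unpack what the paper simply cites as Lemma~4.1 of Dyckhoff--Negri (monotonicity) combined with admissible weakening. The only minor caution is that a pure height induction for $\lift'$ needs care in the $\land_L,\lor_L$ cases where $y\lab A$ splits into two $y$-labelled subformulae; the standard fix (formula complexity as outer induction, or lifting several formulae simultaneously) is routine and is implicitly covered by your appeal to the monotonicity lemma.
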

\begin{proof}
For the derivable rules, just note that
$$
\infer[\text{Ref}]{\Rscr, X, x\lab A\iimp B \seq Y}
{\infer[\iimp_L^t]{\Rscr, X,x\leq x, x\lab A\iimp B \seq Y}
{\deduce{\Rscr, X,x\leq x, x\lab A\iimp B  \seq Y,x\lab A}{}&
\deduce{\Rscr, X,x\leq x, x\lab B \seq Y}{}}}
\quad
\infer[\text{Ref}]{\Rscr, X, x\lab P, \seq Y,x\lab P}
{\infer[\init^t]{\Rscr,x\leq x,X,x\lab P \seq Y,x\lab P}{}} 
$$\qed
\end{proof}

Using an argument similar to the one in~\cite{DBLP:conf/aiml/GoreR12}, it is easy to see that, in the presence of the primed rules shown above,
 the relational rules are admissible. Moreover, labels are preserved.
\begin{theorem}
\label{theo:GtI-LLNS}
$\GtI$  is  label-preserving equivalent to $\LLNS_{\mLJ}$.
\end{theorem}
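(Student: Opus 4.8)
The plan is to establish a two-way correspondence between derivations in $\GtI$ and derivations in $\LLNS_{\mLJ}$ such that the labels (state variables) appearing in a sequent are literally preserved, modulo the standard identification of the relation term $x \leq y$ in $\GtI$ with the nesting relation $xRy$ in $\LLNS_{\mLJ}$. First I would fix the direction $\LLNS_{\mLJ} \to \GtI$: every rule of $\LLNS_{\mLJ}$ (Fig.~\ref{fig:lab-nes-mLJ}) is, up to the reading $xRy$ as $x \leq y$, either an instance of a primed rule from the preceding Lemma or directly a rule of $\GtI$. Concretely, $\mathbb{TL}(\iimp_L^n)$ is $\iimp_{L'}$, $\mathbb{TL}(\init^n)$ is $\init'$, $\mathbb{TL}(\lift^n)$ is $\lift'$, and the $\land$, $\lor$, $\bot_L$ and $\iimp_R$ cases coincide with the corresponding $\GtI$ rules (recalling that $\iimp_R^t$ and $\mathbb{TL}(\iimp_R^n)$ are syntactically the same rule, with $y$ fresh). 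By the Lemma these primed rules are derivable or admissible in $\GtI$ up to weakening, so any $\LLNS_{\mLJ}$ derivation maps to a $\GtI$ derivation with the same end-sequent, and the label set is never enlarged except by the fresh $y$ introduced by $\iimp_R$, which is exactly mirrored on both sides.

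For the converse direction $\GtI \to \LLNS_{\mLJ}$ I would use the two normalisation observations recorded just before the statement: applications of $\mathrm{Trans}$ in $\GtI$ can be pushed to the leaves, immediately above $\init^t$, and $\mathrm{Ref}$ likewise only needs to feed an $\init^t$ or an $\iimp_L^t$ with matching world. So given a $\GtI$ derivation, first normalise it so that every relational rule ($\mathrm{Ref}$, $\mathrm{Trans}$) occurs in such a leaf-block. Then the block "($\mathrm{Ref}$ and/or $\mathrm{Trans}$) followed by $\init^t$" collapses to a single use of $\init'$, and "($\mathrm{Ref}$ and/or $\mathrm{Trans}$) followed by $\iimp_L^t$" collapses to $\iimp_{L'}$ — this is exactly the content of using "an argument similar to the one in~\cite{DBLP:conf/aiml/GoreR12}" to show the relational rules are admissible in the presence of the primed rules. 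What remains is a derivation built only from the primed/unprimed rules, which by the translation above is literally a derivation in $\LLNS_{\mLJ}$. Since none of these transformations ever changes which state variables label which formulae, the equivalence is label-preserving; in particular a $\GtI$-provable labelled nested sequent is $\LLNS_{\mLJ}$-provable with the identical labelling, and conversely.

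The main obstacle I expect is the bookkeeping around the relational rules in the $\GtI \to \LLNS_{\mLJ}$ direction: one must check that pushing $\mathrm{Trans}$ to the leaves does not get blocked by a use of $\iimp_R^t$ (which introduces a fresh $y$ together with a new term $x \leq y$), and that after $\mathrm{Ref}$ is restricted to feeding $\init^t$ and $\iimp_L^t$ there are no "stray" reflexivity instances needed elsewhere — here one invokes admissibility of weakening in $\GtI$ and the derivability of monotonicity (Lemma~4.1 of~\cite{DBLP:journals/aml/DyckhoffN12}), exactly as flagged in the text. The $\lift'$ rule is the delicate point: on the $\GtI$ side moving $x \lab A$ to $y \lab A$ along $x \leq y$ is the admissible rule from the Lemma, justified by monotonicity, whereas in $\LLNS_{\mLJ}$ it is the primitive rule $\mathbb{TL}(\lift^n)$; matching these up is where the "admissible" (rather than "derivable") status must be handled with care, but since it only ever changes a label and not the multiset structure, label preservation is not threatened. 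Everything else is a routine rule-by-rule induction on derivation height.
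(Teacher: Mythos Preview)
Your proposal is correct and follows essentially the same route as the paper: the paper's argument is just the preceding Lemma (primed rules derivable/admissible in $\GtI$) together with the one-line observation that, in the presence of the primed rules, the relational rules become admissible by a Gor\'e--Ramanayake style argument, with label preservation. You have simply unpacked that sketch in more detail, including the leaf-normalisation of $\mathrm{Trans}$/$\mathrm{Ref}$ and the block-collapsing into $\init'$ and $\iimp_{L'}$, which is exactly what the paper gestures at.
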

That is, nestings in $\NS_{\mLJ}$ and $\LNS_{\mLJ}$ correspond to worlds in 
the Kripke structure where the sequent is valid and this is the semantical interpretation of the linear nested system for intuitionistic logic~\cite{Fitting:2014}. 
 
\subsubsection{Wrapping up}
Observe that, in labelled line systems, the relation $R$ relates two components in a sequence, hence $\leq$ is the 
%reflexive-
transitive
%, anti-symmetric 
closure of $R$ in any nested path. 
Since derivability is the same in $\LNS_{\mLJ}$ and $\LLS_{\mLJ}$ (Thm.~\ref{trans}),
this means that
a proof in $\LNS_{\mLJ}$ corresponds to a successful trace in a deep-first proof strategy in $\GtI$, which, by its turn, corresponds to 
a path in the Kripke semantics of intuitionistic logic.

Moreover, more than internalising the semantics, end-active linear nested systems actually show that we may consider only the upper most words in the Kripke semantics. 

Finally, since $\mLJ$ derivations are equivalent to blocked end-active $\LNS_{\mLJ}$
derivations, the semantical analysis for $\LNS_{\mLJ}$ also hold for $\mLJ$.  

In what follows, we will show how this tour on different proof systems can be smoothly extended to (multi-) normal modalities and to non-normal modalities, using propositional classical logic as the base logic. 
%(see Appendix~\ref{sec:systems}). 
 
\section{Multi-modal logics}\label{sec:mm}
%!TEX root = IJCAR-18-ext.tex

%The next natural step on investigating the relationship between frame semantics and nested sequent systems is to consider modal systems.

In~\cite{DBLP:conf/lpar/LellmannP15} we presented sequent calculi and linear nested systems for  multimodal logics with a simple
interaction between  modalities, called \emph{simply dependent
multimodal logics}~\cite{Demri:2000}. 
The language for these logics
contains indexed modalities $\ibox{i}$ for indices $i$ from an index
set $N \subseteq \nat$ of natural numbers. The axioms are given by
extensions of the axioms of modal logic $\K$ for every modality
$\ibox{i}$ together with \emph{interaction axioms} of the form
$\ibox{i} A \iimp \ibox{j} A$.  

In this paper we will consider 
a subset of these logics such that, for each index, the underline logic is
an extension of
$\K$  with axioms from the set $\{\D,\T,\4\}$
(see Fig.~\ref{fig:modal-axioms}). 

\begin{figure}[t]
$    \K\; \Box (A \iimp B) \iimp (\Box A \iimp \Box B)
    \qquad
    \vcenter{\infer[\mathsf{nec}]{\Box A}{A}}
    \qquad
    \D\; \neg \Box \bot
    \qquad
    \T\; \Box A \iimp A
    \qquad
    \4\; \Box A \iimp \Box \Box A
$  \caption{Modal logic $\K$ contains
    the propositional tautologies, modus ponens, $\K$ and
    $\mathsf{nec}$.}\label{fig:modal-axioms}
\end{figure}

\begin{definition}\label{def:sdml}
A  {\em simply dependent multimodal logical system}
is given by a triple $(N,\cless,F)$, where $N$ is a finite set of
natural numbers, $(N,\cless)$ is a partial order and $F$ is a mapping from $N$ to  $2^{\{ \mathsf{D}, \mathsf{T}, \mathsf{4}\}}$. Moreover,  $(N,\cless,F)$ is \emph{transitive-closed}, that is, for
  every $i,j\in N, j\cless i$, if $\K\4 \subseteq F(j)$ then
  $\K\4\subseteq F(i)$.

 The \emph{logic described by}
$(N,\cless,F)$  has modalities $\ibox{i}$ for every $i \in N$  with  
the axioms and rules of classical propositional logic
together
with rules and
axioms for the
modality $i$ given by the necessitation rule and the $\K$
axiom for $\ibox{i}$ as well as the axioms $F(i)$, and interaction axioms
$\ibox{i} A\iimp \ibox{j} A$ for every $i,j \in I$ with $j\cless
i$, understood as zero-premise rules.
\end{definition}
\begin{remark} Observe that  $\K$ and its $\{\D,\T,\4\}$-extensions are 
trivial cases of simply dependent multimodal logics where the index set
$N$ is a singleton. This means that all the results stated in what follows hold for, \eg, $\Sf$ and $\K\D$. Also note that the modal axioms $\D$ and $\T$ propagate upwards, so there is no need for reflexive or serial closure.
\end{remark}

The following definition extends the concept of frames to simply
dependent multimodal logic.  The notions of valuations, model and
truth in a world of the model are defined as usual (see~\cite{Blackburn:2001fk}).
\begin{definition}
  Let $(N,\cless,F)$ be a description for a simply dependent
  multimodal logic. A {\em $(N,\cless,F)$-frame} is a tuple
  $(W,(R_i)_{i\in N})$ consisting of a set $W$ of \emph{worlds} and an
  \emph{accessibility relation} $R_i$ for every index $i \in N$, such
  that for all $i,j \in N$:
  \begin{itemize}
  \item If the logic $F(i)$ contains $\K\D$, then $R_i$ is serial.
  \item If the logic $F(i)$ contains $\K\T$, then $R_i$ is reflexive.
  \item If the logic $F(i)$ contains $\K\4$, then $R_i$ is transitive.
  \item If $j \cless i$, then $R_j \subseteq R_i$.
  \end{itemize}
\end{definition}
Since here we only consider simply dependent multimodal logics where
the different component logics are extensions of $\K$ with axioms from
$\{\D,\T,\4\}$, and since the interaction axioms are of a particularly
simple shape, 
standard results e.g.\ from Sahlqvist
theory immediately yield
soundness and completeness of the description $(N,\cless,F)$ w.r.t. the logic
of the class of $(N,\cless,F)$-frames.

\subsection{Indexed nested systems}
Nestings can  be extended to multi-modalities by adding indexes.
An indexed nested sequent~\cite{Fitting2015}, is a nested sequent where each sequent node  is marked with an index taken from $N$, and it is denoted by
$\Gamma\seq\Delta,[G_1]^{i_1},[G_2]^{i_2},\ldots,[G_n]^{i_n}$.

We will denote by $\upset{j}$  the \emph{upset} of
the index $j$, i.e., the set $\{i \in N : j \cless i\}$ and extend this notation to the sets
$\upset[\mathsf{Ax}]{j} \defs \{ i \in N : j \cless i, \,
\K\mathsf{Ax}\subseteq F(i)\}$ and
$\upset[\neg\mathsf{Ax}]{j} \defs \{ i \in N : j \cless i, \,
\K\mathsf{Ax}\not\subseteq F(i)\}$ where $\mathsf{Ax}$ is any of the
axioms $\D,\T,\4$. Thus e.g.\ the set $\upset[\neg\4]{j}$ is the set
of indices $i$ with $j\cless i$ such that $\K\4 \not\subseteq F(i)$,
i.e., the logic $F(i)$ does not derive the transitivity axiom $\4$.

$\NS_{(N,\cless,F)}$ (see Fig.~\ref{fig:nestmulti}) is the nested sequent system for the  $(N,\cless,F)$ description for a simply dependent multimodal
logic.
\begin{figure}[t]
$
\begin{array}{c}
  \infer[{\ibox{ij}}_L^n]{\Upsilon\{\Gamma, \Box_iA \seq \Delta,
    \nes{\Sigma \seq \Pi}^j,\Lambda\}}{\Upsilon\{\Gamma, \Box_iA \seq \Delta,
     \nes{\Sigma,A \seq \Pi}^j,\Lambda\}}
  \qquad
  \infer[{\ibox{i}}_R^n]{\Upsilon\{  \Gamma \seq \Delta, \ibox{i}
    A,\Lambda\}}{\Upsilon\{\Gamma \seq \Delta, \nes{\;\seq A}^i,\Lambda\}}
\\\\
  \infer[\mathsf{d}_{ij}^n]{\Upsilon\{ \Gamma, \ibox{i} A \seq
    \Delta,\Lambda\}}{\Upsilon\{ \Gamma, \ibox{i} A \seq \Delta, \nes{A \seq \;}^j,\Lambda\}}
  \qquad
  \infer[\mathsf{t}_i^n]{\Upsilon\{ \Gamma, \ibox{i} A \seq
    \Delta,\Lambda\}}{\Upsilon\{ \Gamma, \ibox{i} A, A \seq \Delta,\Lambda\}}
  \qquad
  \infer[{\4_{ij}^n}]{\Upsilon\{\Gamma, \ibox{i} A \seq \Delta,
    \nes{\Sigma \seq \Pi}^j,\Lambda\}}{\Upsilon\{\Gamma, \Box_iA \seq \Delta,
    \nes{\Sigma,\ibox{i}A \seq \Pi}^j,\Lambda\}}
\end{array}
$
  \begin{align*}
  \mathcal{S}_{(N,\cless, F)} \defs &\; \{ {\ibox{i}}_R^n : i \in N\} \cup \{
  {\ibox{ij}}_L^n : i,j \in N, i \in \upset{j}\} \cup \{ \mathsf{d}_{ij}^n :
  i,j \in N, i \in \upset[\D]{j}\}\\
  & 
   \cup \{ \mathsf{t}_i^n : i \in N, \K\T \subseteq F(i)\} 
  \cup \{
  {\4_{ij}^n} : i,j \in N, i\in \upset[\4]{j}\}
  \end{align*}
    \caption{Indexed nested sequent rules for $\NS_{(N,\cless,F)}$.}
  \label{fig:nestmulti}
\end{figure}
Next result follows the same lines as in Sec.~\ref{sec:nested}.
\begin{theorem} $\NS_{(N,\cless,F)}$ has the following properties:
 weakening is height-preserving admissible;
 all introduction rules in are height-preserving invertible;
any pair of rules  is permutable.
\end{theorem}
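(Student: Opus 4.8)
The statement asserts three structural properties of $\NS_{(N,\cless,F)}$: height-preserving admissibility of weakening, height-preserving invertibility of all introduction rules, and permutability of every pair of rules. The plan is to follow exactly the pattern established for $\NS_{\mLJ}$ in Theorem~\ref{prop:inv-perm}, proceeding by induction on the height of derivations and a case analysis on the last rule applied. First I would fix precise statements: weakening adds either a labelled/unlabelled formula to any component, or an entire fresh indexed nesting $\nes{\;\seq\;}^j$, to the conclusion of any derivable nested sequent, preserving height; for a nesting to be weakened in, one must check that the freshly created empty indexed node is compatible with the tree shape of $\Upsilon\{\;\}$, which it trivially is since $\NS$ works with holed contexts.

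\textbf{Weakening.} The base cases are $\init^n$, $\bot_L^n$ (inherited from the classical propositional base) which remain axioms after adding material, since their principal and side formulae are untouched. For the inductive step, each rule in $\mathcal{S}_{(N,\cless,F)}$ acts inside a holed context $\Upsilon\{\cdot\}$, so weakening either lands in a component disjoint from the active ones — commute it past the rule and apply the induction hypothesis to the premises — or lands exactly in an active component, in which case the weakened formula simply rides along as part of $\Gamma,\Delta,\Sigma,\Pi$ or $\Lambda$, again applying the induction hypothesis. No rule has a variable-freshness side condition that could be violated (the nested calculus, unlike the labelled one, creates nestings rather than fresh variables), so there is nothing delicate here. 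This step is essentially bookkeeping.

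\textbf{Invertibility.} For each introduction rule I would show: if the conclusion is derivable with height $n$, so is each premise with height $\le n$. Proceed by induction on $n$. If the last rule applied to the conclusion is the rule in question, the premises are immediately available. Otherwise the last rule $r$ is some other rule; if $r$ does not touch the principal formula of the rule we are inverting, we apply the induction hypothesis to the premises of $r$ and then reapply $r$ — here one must check that $r$ and the inverted rule act on distinct auxiliary structures, which holds because $\NS_{(N,\cless,F)}$ is shallow n-directed, so the locations of active structures are controlled. The genuinely interactive cases are: inverting $\ibox{i}_R^n$ when $r$ also acts on the nesting $\nes{\;\seq A}^i$ it creates (handled because that nesting is an auxiliary structure, hence excluded from the permutation requirement, exactly as in the Remark following Theorem~\ref{prop:inv-perm}); and the contraction-style rules $\ibox{ij}_L^n$, $\mathsf{t}_i^n$, $\4_{ij}^n$, $\mathsf{d}_{ij}^n$ which copy $\Box_iA$ or add $A$ — these are invertible because their principal formula is retained in the premise and the auxiliary material is only added, so one can always re-derive it using weakening-admissibility (just proved) together with the induction hypothesis. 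As remarked, invertibility is stated so as to preserve both the derivation height and the minimal level of the active components.

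\textbf{Permutability.} Finally, permutability of each pair $(r_1,r_2)$ is a case-by-case check in the spirit of Definition~\ref{def:perm}, using invertibility and weakening-admissibility. Propositional/propositional pairs are inherited from the classical base. For a propositional or nesting rule below a modal rule, one pushes the upper rule down: if the two rules act on incomparable positions they trivially commute; if the modal rule creates or enlarges a nesting that the other rule then operates in, we invert the modal rule on the premise, apply the other rule, and reassemble — the only subtlety being pairs where the lower rule's active nesting would be an auxiliary structure of the upper rule, which by Definition~\ref{def:perm} are not required to permute (the $\ibox{i}_R$/$\lift$-style configuration from the Remark). For two modal rules, the worst cases involve the transitivity rule $\4_{ij}^n$ interacting with $\ibox{ij}_L^n$ or with another $\4$ on a nested index; the transitive-closedness condition on $(N,\cless,F)$ is exactly what guarantees that the index bookkeeping goes through — if $\K\4\subseteq F(j)$ forces $\K\4\subseteq F(i)$ for $j\cless i$, the rule we need on the reorganised derivation is indeed present in $\mathcal{S}_{(N,\cless,F)}$.

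\textbf{Main obstacle.} The one part that needs real care, rather than routine commutation, is the permutation arguments involving the transitivity rule $\4_{ij}^n$ together with the left-$\Box$ and $\D$ rules across a nesting boundary: here one must simultaneously use invertibility of $\ibox{i}_R^n$, admissible weakening, and the transitive-closedness hypothesis to ensure that after reorganisation all invoked rule instances remain instances of rules actually in the system for the given description $(N,\cless,F)$. Everything else reduces to the holed-context bookkeeping already rehearsed for $\NS_{\mLJ}$.
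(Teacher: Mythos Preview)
Your proposal is correct and follows essentially the same approach as the paper: the paper's own proof is a one-line pointer (``Next result follows the same lines as in Sec.~\ref{sec:nested}''), meaning the argument of Theorem~\ref{prop:inv-perm}, which is precisely induction on derivation height for weakening and invertibility, followed by a case-by-case permutability analysis using those two ingredients. Your elaboration---including the handling of auxiliary structures via the Remark after Theorem~\ref{prop:inv-perm} and the role of transitive-closedness for the $\4_{ij}^n$ interactions---supplies detail the paper omits but stays entirely within its intended proof pattern.
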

$\NS_{(N,\cless,F)}$ is shallow n-directed, hence the depth first strategy holds with the following classification of rules: {\em local phase}: propositional and $\mathsf{t}_i$ rules; {\em nesting phase}: $\mathsf{d}_i$ and $\ibox{i}_R$ rules; {\em lift phase}: ${\ibox{ij}}_L$ and ${\4{ij}}_L$ rules.
This allows the definition of a linear version of nested systems for  the simply dependent
    multimodal logic given by the description $(N,\cless,F)$, system $\LNS_{(N,\cless,F)}$ presented in Fig.~\ref{fig:lns-rules-simply-dep-mult}. 
    \begin{figure}[t]
$
\begin{array}{c}
   \infer[{\ibox{i}}_R^l]{\mathcal{G}\lns[k]  \Gamma \seq \Delta, \ibox{i}
    A}{\mathcal{G}\lns[k]  \Gamma \seq \Delta\, \lnse[i] \;\seq A}
    \qquad 
     \infer[{\ibox{ij}}_L^l]{\Gscr\lns[k] \Gamma, \ibox{i} A \seq \Delta
    \lnse[j]\Sigma \seq \Pi}{\Gscr\lns\Gamma,\ibox{i} A \seq \Delta
    \lnse[j]\Sigma,A \seq \Pi}
    \qquad
     \infer[\cls]{\mathcal{G} \lnse[k] \Gamma\seq \Delta}{\mathcal{G}
    \lns[k] \Gamma\seq \Delta }
\\\\
        \infer[\mathsf{d}_{ij}^l]{\mathcal{G}\lns[k] \Gamma, \ibox{i} A \seq
    \Delta}{\mathcal{G} \lns[k] \Gamma \seq \Delta \lnse[j] A \seq \;}
\qquad
   \infer[\mathsf{t}_i^l]{\Gscr\lns[k] \Gamma, \ibox{i} A \seq
    \Delta}{\Gscr\lns[k] \Gamma, \ibox{i} A, A \seq \Delta}
  \qquad
  \infer[{\4_{ij}^l}]{\Gscr\lns[k] \Gamma, \ibox{i} A \seq \Delta
    \lnse[j]\Sigma \seq \Pi}{\Gscr\lns[k]\Gamma,\ibox{i} A \seq \Delta
    \lnse[j]\Sigma,\ibox{i}A \seq \Pi}
\end{array}
$
  \begin{align*}
  \mathcal{S}_{(N,\cless, F)} \defs &\; \{ {\ibox{i}}_R : i \in N\} \cup \{
  {\ibox{ij}}_L : i,j \in N, i \in \upset{j}\} \cup \{ \mathsf{d}_{ij} :
  i,j \in N, i \in \upset[\D]{j}\}\\
  & 
   \cup \{ \mathsf{t}_i : i \in N, \K\T \subseteq F(i)\} 
  \cup \{
  {\4_{ij}} : i,j \in N, i\in \upset[\4]{j}\}
  \end{align*}
    \caption{End-active, blocked form system $\LNS_{(N,\cless,F)}$.}
  \label{fig:lns-rules-simply-dep-mult}
\end{figure}
    Observe that 
rules $\mathsf{d}_{ij}^l$ and $\ibox{i}_R^l$ are not invertible in $\LNS_{(N,\cless,F)}$.
Using blocked forms, the linear nested system emulates the rules of the sequent system $\SC_{(N,\cless,F)}$ in Fig.~\ref{fig:simply-dep-multimodal}~\cite{DBLP:journals/corr/LellmannP17}, as  illustrated next  for the rule $\mathsf{d}_i$:

\resizebox{\columnwidth}{!}{
$
\infer[{\mathsf{d}_{ij}}]{\Gscr\lns[k]\Omega,\{\ibox{i} \Gamma_i,\ibox{i} \Sigma_i
      : i \in \upset[\4]{j}\}, \{\{\ibox{i}A_i\} : i\in
      \upset[\neg\4]{j}\} \seq \Delta}
      {\infer=[{\ibox{ij}}_L]{\Gscr\lns[k]\Omega,\{\ibox{i} \Gamma_i,\ibox{i} \Sigma_i
      : i \in \upset[\4]{j}\}, \{\{\ibox{i}A_i\} : i\in
      \upset[\neg\4]{j}\} \seq \Delta\lnse[j] A_1\seq}
      {\infer=[\4_{ij}]{\Gscr\lns[k]\Omega,\{\ibox{i} \Gamma,\ibox{i} \Sigma_i
      : i \in \upset[\4]{j}\}, \{\{\ibox{i}A_i\} : i\in
      \upset[\neg\4]{j}\}  \seq\Delta\lnse[j]  \Sigma_i, A_1,\ldots,A_n\seq }
      {\infer[\cls]{\Gscr\lns[k]\Omega,\{\ibox{i} \Gamma,\ibox{i} \Sigma_i
      : i \in \upset[\4]{j}\}, \{\{\ibox{i}A_i\} : i\in
      \upset[\neg\4]{j}\} \seq\Delta\lnse[j] \{\ibox{i} \Gamma_i,\ibox{i} \Sigma_i
      : i \in \upset[\4]{j}\},  \Sigma_i, A_1,\ldots,A_n\seq }
      {\deduce{\Gscr\lns[k]\Omega,\{\ibox{i} \Gamma_i,\ibox{i} \Sigma_i
      : i \in \upset[\4]{j}\}, \{\{\ibox{i}A_i\} : i\in
      \upset[\neg\4]{j}\}  \seq\Delta\lns[j] \{\ibox{i} \Gamma_i,\ibox{i} \Sigma_i
      : i \in \upset[\4]{j}\},  \Sigma_i, A_1,\ldots,A_n\seq }{}}}}}
$}

\begin{figure}[t]
$
\begin{array}{c}
    \infer[\mathsf{k}_i]{\Omega,\{\ibox{j} \Gamma_j, \ibox{j} \Sigma_j
      : j\in \upset[\4]{i}\}, \{ \ibox{j} \Sigma_j :
      j\in\upset[\neg\4]{i}\} \seq \ibox{i} A, \Xi}{\{\ibox{j}
      \Gamma_j, \ibox{j} \Sigma_j, \Sigma_j : j \in \upset[\4]{i}\},
      \{ \Sigma_j : j \in \upset[\neg\4]{i}\} \seq A}
\\\\
    \infer[\mathsf{d}_i]{\Omega,\{\ibox{j} \Gamma_j, \ibox{j} \Sigma_j
      : j \in \upset[\4]{i}\}, \{ \ibox{j} \Sigma_j : j\in
      \upset[\neg\4]{i}\} \seq \Xi}{\{\ibox{j} \Gamma_j, \ibox{j}
      \Sigma_j, \Sigma_j : j \in \upset[\4]{i}\}, \{ \Sigma_j : j \in
      \upset[\neg\4]{i}\} \seq \;} \qquad\quad
    \infer[\mathsf{t}_i]{\Omega, \{\ibox{j} \Sigma_j : j \in
      \upset{i}\} \seq \Xi }{\Omega, \{\ibox{j} \Sigma_j,\Sigma_j : j
      \in \upset{i} \} \seq \Xi}
\end{array}
$
  \[
    \mathcal{S}_{(N,\cless,F)} \defs \{ \mathsf{k}_i : i \in N \} \cup
    \{ \mathsf{d}_i : i \in N, \K\D \subseteq F(i)\} \cup \{
    \mathsf{t}_i : i \in N, \K\T \subseteq F(i) \}
  \]
  \caption{Modal sequent rules for $\SC_{(N, \cless, F)}$.}
  \label{fig:simply-dep-multimodal}
\end{figure}
Finally, Def.~\ref{def:Tx} of Sec.~\ref{subsec:lbns} can be extended to the multi-modal case in a trivial way, resulting in the labelled nested system $\LLNS_{(N,\cless,F)}$ (Fig.~\ref{fig:lab-indnes}).
\begin{figure}[t]
$
\begin{array}{c}
\infer[\mathbb{TL}({\ibox{ij}}_L^n)]{\Rscr,xR_jy, X, x\lab \ibox{i}A \seq Y}{\Rscr,xR_jy, X, x\lab \ibox{i}A,y\lab A \seq Y}
\quad
\infer[\mathbb{TL}({\ibox{i}}_R^n)]{\Rscr,X\seq Y, x:\ibox{i}A}{\Rscr,xR_iy,X\seq Y,y\lab A}
\quad
\infer[{\mathbb{TL}(\mathsf{t}_i)}]{\Rscr, X, x\lab \ibox{i}A \seq Y}{\Rscr, X, x\lab \ibox{i}A,x\lab A \seq Y}
\\\\
\infer[\mathbb{TL}({\mathsf{d}_{ij}}^n)]{\Rscr, X, x\lab \ibox{i}A \seq Y}{\Rscr,xR_jy, X, x\lab \ibox{i}A,y\lab A \seq Y}
\qquad
\infer[\mathbb{TL}(\4_{ij}^n)]{\Rscr,xR_jy, X, x\lab \ibox{i}A \seq Y}{\Rscr,xR_jy, X, x\lab \ibox{i}A,y\lab \ibox{i}A \seq Y}
\end{array}
$
\caption{Modal rules for labelled indexed nested system $\LLNS_{(N,\cless,F)}$.}\label{fig:lab-indnes}
\end{figure}

\begin{theorem}\label{prop:equiv2}
Systems $\NS_{(N,\cless,F)}$, $\LNS_{(N,\cless,F)}$, $\SC_{(N,\cless,F)}$ and $\LLNS_{(N,\cless,F)}$  are equivalent.
\end{theorem}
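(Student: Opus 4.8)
The plan is to assemble this as a corollary of the machinery already developed for the intuitionistic case in Section~\ref{sec:int}, transported to the multi-modal setting. The chain of equivalences factors through four links, each of which is an instance of a general result proved earlier, so the proof is essentially a matter of checking that the hypotheses of those results hold for $\NS_{(N,\cless,F)}$.

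First I would establish the link $\NS_{(N,\cless,F)} \equiv \LNS_{(N,\cless,F)}$. This uses the immediately preceding theorem (height-preserving weakening admissibility, invertibility of introduction rules, full permutability) together with the fact---already asserted in the paragraph after that theorem---that $\NS_{(N,\cless,F)}$ is shallow n-directed, with the stated classification of rules into local/nesting/lift phases. By Theorem~\ref{thm:norm} every proof in $\NS_{(N,\cless,F)}$ can be put in the depth-first normal form, and by Lemma~\ref{lemma:prov} (the right-hand comma is intuitionistic in n-directed systems) the nesting phase can be restricted to creating a single nesting; this is exactly the linearisation argument spelled out in Section~\ref{sec:lnested}, which yields a 1-1 correspondence between normal-form nested derivations and linear nested derivations. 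The one point that needs a remark is that the indexed nestings carry labels $i\in N$, but the linearisation carries the index $j$ of each nesting along into the component marker $\lns[j]$, so nothing is lost.

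Next, the link $\LNS_{(N,\cless,F)} \equiv \SC_{(N,\cless,F)}$. Here I would invoke the blocked-form discussion of Section~\ref{sec:lnested}: adding the auxiliary operator $\lnse$ to premises of the nesting-phase rules and closing with $\cls$ forces the nesting+lift phases to occur in a block, giving a 1-1 correspondence (modulo reordering of lift-phase instances) between blocked-form end-active $\LNS$ derivations and $\SC$ derivations, where each sequent rule is read as a nested macro-rule. The explicit derivation displayed for $\mathsf{d}_i$ (and the analogous ones for $\mathsf{k}_i$ and $\mathsf{t}_i$) shows precisely how the macro-rules of Fig.~\ref{fig:simply-dep-multimodal} are emulated; I would just state that the remaining cases are analogous, noting the soundness/completeness of $\SC_{(N,\cless,F)}$ is the one external ingredient (from~\cite{DBLP:journals/corr/LellmannP17}). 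Finally, the link $\NS_{(N,\cless,F)} \equiv \LLNS_{(N,\cless,F)}$ is Theorem~\ref{trans} applied to the multi-modal mapping $\mathbb{TL}$ of Fig.~\ref{fig:lab-indnes} (the trivial extension of Def.~\ref{def:Tx}): $\mathbb{TL}_x$ preserves open derivations, so it is a bijection on derivations. Composing the three links closes the cycle.

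The main obstacle I expect is verifying that $\NS_{(N,\cless,F)}$ genuinely satisfies full permutability and shallow n-directedness for \emph{all} the modal rules simultaneously---in particular that the transitivity rule $\4_{ij}^n$, which reintroduces $\ibox{i}A$ into a child nesting, still has its auxiliary formula strictly below its principal formula (so $A\prec A_i$, keeping the system n-directed) and that the transitive-closure condition on $(N,\cless,F)$ is exactly what makes the $\4$ and $\ibox{ij}_L$ interactions permute without breaking the block structure. A secondary subtlety is the bookkeeping of indices through linearisation: one must check that $\upset[\4]{j}$ versus $\upset[\neg\4]{j}$ splits the nestings correctly so that the blocked macro-rule recovers the exact shape of $\mathsf{k}_i$, $\mathsf{d}_i$ in Fig.~\ref{fig:simply-dep-multimodal}. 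Both points are, however, local rule-permutation checks of the same flavour as in the intuitionistic case, so the proof is again purely syntactic, unlike the semantic arguments of~\cite{Fitting2015,DBLP:journals/corr/LellmannP17}.
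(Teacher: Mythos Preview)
Your proposal is correct and follows exactly the approach the paper intends: the paper states this theorem without proof, simply prefacing it with ``The results in the previous sections entail the following,'' so you have faithfully unpacked the chain of equivalences (normalisation/linearisation via Theorem~\ref{thm:norm} and Lemma~\ref{lemma:prov}, blocked-form macro-rules for the $\SC$ link, and Theorem~\ref{trans} for the labelled-nested link) that the paper leaves implicit after having carried it out for the intuitionistic case in Theorem~\ref{prop:equiv1}. Your flagged obstacles (permutability with $\4_{ij}^n$ and index bookkeeping) are indeed the only places requiring care, and they are precisely the local checks covered by the preceding theorem on invertibility and permutability of $\NS_{(N,\cless,F)}$.
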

%\subsection{Comparing labels}\label{sec:complabelsm}
By straightforwardly extending the geometric rule scheme presented in~\cite{Negri:2005p878}  for normal modalities to the multi-modal case, we can propose
$\Gtmm$, a sound and complete labelled sequent system for $(N,\cless,F)$.
Figs.~\ref{fig:semantic} and~\ref{fig:relmulti} present the modal and relational rules of
$\Gtmm$ (see also~\cite{DBLP:conf/lics/GargGN12}). 

\begin{figure}[t]
\begin{subfigure}{.4\linewidth}
$
\begin{array}{c}
   \infer[\ibox{i}_L^t]{\Rscr,xR_iy,\world{x}{\Box_i A},  \Gamma \seq\Delta}{\Rscr,xR_iy,\world{x}{\Box_i A}, \world{y}{A}, \Gamma \seq\Delta}
\\ \\ \infer[\ibox{i}_R^t]{\Rscr,\Gamma \seq\Delta, \world{x}{\Box_i A}}{\Rscr,xR_i y,\Gamma \seq\Delta, \world{y}{A}}
\end{array}
$
\caption{Modal rules.}
\label{fig:semantic}
\end{subfigure}
\begin{subfigure}{.65\linewidth}
$
\begin{array}{cc}
\infer[\text{Ref}_i]{\Rscr,\Gamma \seq\Delta}{\Rscr,xR_ix, \Gamma \seq\Delta} & 
  \infer[\text{Trans}_i]{\Rscr,xR_iy, yR_iz, \Gamma \seq\Delta}{\Rscr,xR_iz, xR_iy, yR_iz, \Gamma
\seq\Delta} \\[5pt]
%\infer[\text{Sym}]{xR_iy, \Gamma \seq\Delta}{yR_ix, xR_iy, \Gamma \Rightarrow
%\Delta} & 
%  \infer[\text{Eucl}]{xR_iy, xR_iz, \Gamma \seq\Delta}{yR_iz, xR_iy, xR_iz, \Gamma
%\seq\Delta} 
\infer[\text{Ser}_i]{\Rscr,\Gamma \seq\Delta}{\Rscr,xR_iy, \Gamma \seq\Delta} &
\infer[\text{Int}\; {[j\cless i]}]{\Rscr,xR_jy,\Gamma \seq\Delta}{\Rscr,xR_jy,xR_iy, \Gamma \seq\Delta}  \\[5pt]
\end{array}
$
\caption{Multi-modal relational rules. $y$ is  fresh in \text{Ser}.}
\label{fig:relmulti}
\end{subfigure}
\caption{Some rules of the labelled system $\Gtmm$.}
\end{figure}

%\begin{definition}
%For a state variable $x$, define the mapping $\mathbb{TL}_{x}$ from
%indexed to labelled nested systems
%as follows
%$$
%\begin{array}{lcl}
%\mathbb{TL}_{x}(\Gamma \seq \Delta,\nes{\Gamma_1 \seq \Delta_1,\Lambda_1}^{i_1},\ldots,\nes{\Gamma_n \seq \Delta_n,\Lambda_n}^{i_n}  ) &=&
%xR_{i_1}x_1,\ldots, xR_{i_n}x_n, 
%( x\lab \Gamma \seq x\lab \Delta)\tensor\\
%& &\mathbb{TL}_{x_1}(\Gamma_1 \seq  \Delta_1,\Lambda_1)\tensor\ldots \tensor
%\mathbb{TL}_{x_n}(\Gamma_n \seq  \Delta_n,\Lambda_n)\\
%\mathbb{TL}_{x}(\nes{\Gamma \seq \Delta}^i)&=&x\lab \Gamma \seq x\lab \Delta\\
%\end{array}
%$$
%with all state variables pairwise distinct.
%\end{definition}
%The modal rules of the labelled indexed nested system $\LLNS_{(N,\cless,F)}$ are depicted in Fig.~\ref{fig:lab-indnes}.
The next results follow the same lines as the ones in Sec~\ref{sec:int}.
\begin{lemma}\label{lemma:mml}
The rules $\mathbb{TL}(\ibox{ij}_L),\mathbb{TL}(\mathsf{d}_{ij}^n),\mathbb{TL}(\mathsf{t}_i^n), \mathbb{TL}(\4_{ij}^n)$ are derivable  in $\Gtmm$.
\end{lemma}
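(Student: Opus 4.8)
The plan is to derive each of the four labelled nested modal rules $\mathbb{TL}(\ibox{ij}_L^n)$, $\mathbb{TL}(\mathsf{d}_{ij}^n)$, $\mathbb{TL}(\mathsf{t}_i^n)$, $\mathbb{TL}(\4_{ij}^n)$ inside $\Gtmm$ by mimicking, for the multi-modal case, the argument used in Sec.~\ref{sec:int} for intuitionistic logic (and following~\cite{DBLP:conf/aiml/GoreR12}). The point is that the labelled nested rules carry relational atoms $xR_jy$ already present in the context, whereas the $\Gtmm$ rules $\ibox{i}_L^t$ and the relational rules of Fig.~\ref{fig:relmulti} can introduce exactly the relational atoms one needs, using the frame conditions encoded by the interaction rule $\text{Int}$ ($j\cless i$), $\text{Ser}_i$, $\text{Ref}_i$, and $\text{Trans}_i$. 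Since $\Gtmm$ has height-preserving admissibility of weakening (a standard fact for $\G3$-style labelled systems, cf.~\cite{Negri:2005p878,DBLP:conf/lics/GargGN12}), derivability of the rules up to weakening suffices.

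The key steps, in order: First, $\mathbb{TL}(\ibox{ij}_L^n)$: here the conclusion has $xR_jy$ with $i\in\upset{j}$, i.e.\ $j\cless i$; apply $\text{Int}$ to add $xR_iy$, then apply $\ibox{i}_L^t$ on $x\lab\ibox{i}A$ and $xR_iy$ to produce $y\lab A$, matching the premise (the spurious $xR_iy$ is absorbed by weakening, or simply left in the context since it does not affect derivability). Second, $\mathbb{TL}(\mathsf{t}_i^n)$: since this rule is only in the system when $\K\T\subseteq F(i)$, $R_i$ is reflexive in $\Gtmm$; apply $\text{Ref}_i$ to add $xR_ix$, then $\ibox{i}_L^t$ on $x\lab\ibox{i}A$ with $xR_ix$ gives $x\lab A$, exactly the premise. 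Third, $\mathbb{TL}(\mathsf{d}_{ij}^n)$: this rule appears only when $i\in\upset[\D]{j}$, so either $j=i$ with $\K\D\subseteq F(i)$, or $j\cless i$ — in the first case $R_i$ is serial, so $\text{Ser}_i$ introduces a fresh $xR_iy$, then $\ibox{i}_L^t$ gives $y\lab A$; in the second case one first uses seriality of $R_j$ (which holds since $\K\D\subseteq F(j)$ in the relevant configuration, by the transitive-closed condition on $(N,\cless,F)$) to get $xR_jy$ fresh, then $\text{Int}$ to get $xR_iy$, then $\ibox{i}_L^t$. Fourth, $\mathbb{TL}(\4_{ij}^n)$: this rule is present only when $i\in\upset[\4]{j}$, i.e.\ $j\cless i$ and $\K\4\subseteq F(i)$; from $xR_jy$ apply $\text{Int}$ to get $xR_iy$, then $\ibox{i}_L^t$ on $x\lab\ibox{i}A$ yields $y\lab\ibox{i}A$, which is the premise.

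I expect the main obstacle to be bookkeeping of the side conditions: one must check, for each rule, that exactly the frame properties guaranteeing the needed relational-rule applications are available, and here the \emph{transitive-closed} hypothesis on $(N,\cless,F)$ (Def.~\ref{def:sdml}) together with the upset decompositions $\upset[\mathsf{Ax}]{j}$ versus $\upset[\neg\mathsf{Ax}]{j}$ must be invoked correctly — in particular for $\mathsf{d}_{ij}^n$ and $\4_{ij}^n$, where the index $j$ of the nesting need not be the index $i$ of the boxed formula, so one needs that $R_j\subseteq R_i$ (from $j\cless i$) rather than any property of $R_i$ alone, and must be careful that the axiom needed on $R_j$ (seriality for $\mathsf{d}$) indeed holds. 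Everything else is a routine two- or three-rule derivation plus an appeal to height-preserving weakening admissibility in $\Gtmm$.
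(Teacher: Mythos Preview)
Your overall strategy coincides with the paper's: use the relational rules of $\Gtmm$ (Int, Ser, Ref, Trans) to manufacture the needed relational atoms, then apply $\ibox{i}_L^t$, absorbing any extra atoms via admissible weakening. Your treatment of $\mathbb{TL}(\ibox{ij}_L^n)$ and $\mathbb{TL}(\mathsf{t}_i^n)$ is correct and essentially identical to the paper's.

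There is, however, a genuine gap in your handling of $\mathbb{TL}(\4_{ij}^n)$. You write that after obtaining $xR_iy$ via Int, applying $\ibox{i}_L^t$ to $x\lab\ibox{i}A$ ``yields $y\lab\ibox{i}A$, which is the premise''. This is false: the rule $\ibox{i}_L^t$ strips exactly one box, producing $y\lab A$, not $y\lab\ibox{i}A$. But the premise of $\mathbb{TL}(\4_{ij}^n)$ has $y\lab\ibox{i}A$ in the antecedent, and no single $\Gtmm$ rule introduces a \emph{boxed} formula at a successor label. The missing ingredient is $\text{Trans}_i$, available precisely because $i\in\upset[\4]{j}$ gives $\K\4\subseteq F(i)$: it is transitivity of $R_i$ that allows any later use of $y\lab\ibox{i}A$ (via $\ibox{i}_L^t$ with some $yR_iz$) to be simulated from $x\lab\ibox{i}A$ by first obtaining $xR_iz$. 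You never mention $\text{Trans}_i$, so your argument for this case does not go through.

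A smaller issue concerns $\mathbb{TL}(\mathsf{d}_{ij}^n)$: you appeal to the transitive-closed condition on $(N,\cless,F)$ to conclude $\K\D\subseteq F(j)$, but that condition in Def.~\ref{def:sdml} is stated only for the axiom $\4$, not for $\D$, so it does not justify $\text{Ser}_j$. The paper's derivation for this case uses $\text{Ser}_i$ (seriality of $R_i$, which \emph{is} guaranteed by $i\in\upset[\D]{j}$) followed by Int and $\ibox{i}_L^t$; you should rely on the seriality you actually have.
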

\begin{proof} Suppose $i\preceq j$.
$\mathbb{TL}(\ibox{ij}_L^n)$ is derivable as
$$
\infer[\text{Int}]{\Rscr,xR_jy, X, x\lab \ibox{i}A \seq Y}{
\infer[\ibox{i}_L^t]{\Rscr,xR_jy,xR_iy, X, x\lab \ibox{i}A \seq Y}
{\Rscr,xR_jy,xR_iy, X, x\lab \ibox{i}A,y\lab A \seq Y}}
$$
Derivability of $\mathbb{TL}(\mathsf{d}_{ij}),\mathbb{TL}(\mathsf{t}_i), \mathbb{TL}(\4_{ij})$ are also straightforward. For example, if $j\preceq i$ and $\K\D\subseteq F(i)$, 
$$
\infer[\text{Ser}_i]{\Rscr, X,x:\Box_i A\seq Y}
{\infer[\text{Int}]{\Rscr, X,xR_j y,x:\Box_i A\seq Y}
{\infer[\ibox{i}_L^t]{\Rscr, X,xR_j y,xR_ iy,x:\Box_i A\seq Y}
{\deduce{\Rscr, X,xR_j y,xR_ iy,x:\Box_i A, y:A\seq Y}{}}}}
$$
 \qed
\end{proof}
\begin{theorem}
$\Gtmm$ is  label-preserving equivalent  to $\LLNS_{(N,\cless,F)}$.
\end{theorem}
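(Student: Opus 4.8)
The plan is to prove the two inclusions separately, following the template of Theorem~\ref{theo:GtI-LLNS} and the argument of~\cite{DBLP:conf/aiml/GoreR12}, while keeping track of state variables so that the equivalence is label-preserving. For the direction from $\LLNS_{(N,\cless,F)}$ to $\Gtmm$ I would simulate each rule of the labelled nested system by a (derived) rule of $\Gtmm$: the propositional rules are literally shared, $\mathbb{TL}(\ibox{i}_R^n)$ is exactly $\ibox{i}_R^t$, and Lemma~\ref{lemma:mml} already furnishes $\Gtmm$-derivations of $\mathbb{TL}(\ibox{ij}_L^n)$, $\mathbb{TL}(\mathsf{d}_{ij}^n)$, $\mathbb{TL}(\mathsf{t}_i^n)$ and $\mathbb{TL}(\4_{ij}^n)$. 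None of these little derivations introduces a state variable not already present in the conclusion — the only fresh variable is the $y$ of $\ibox{i}_R$, fresh in both calculi, and the relational rules $\text{Int}$, $\text{Ser}_i$, $\text{Ref}_i$, $\text{Trans}_i$ add relational atoms but no labels — so replacing each $\LLNS_{(N,\cless,F)}$ step by the corresponding block yields a $\Gtmm$-derivation with the same labelled end-sequent.

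For the converse I would first normalise $\Gtmm$-derivations. Exactly as the remark in Section~\ref{sec:int} restricts $\text{Trans}$ to the leaves of $\GtI$-derivations, I would push the closure rules $\text{Trans}_i$ and $\text{Int}$, and — using that the axioms $\D$ and $\T$ propagate upwards (see the remark after Definition~\ref{def:sdml}) — also $\text{Ser}_i$ and $\text{Ref}_i$, towards the initial sequents, so that every sequent strictly below a relational step carries a treelike relation set representable in $\LLNS_{(N,\cless,F)}$. I would then establish the multi-modal analogue of the lemma preceding Theorem~\ref{theo:GtI-LLNS}: ``primed'' versions of the modal left rules that do not require a relational atom to be explicitly present, derivable up to weakening from height-preserving admissibility of weakening in $\LLNS_{(N,\cless,F)}$ (inherited from $\NS_{(N,\cless,F)}$ via Theorem~\ref{trans} and its structural-properties theorem) together with a monotonicity property for the indexed boxes. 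With these primed rules available, the same computation as in~\cite{DBLP:conf/aiml/GoreR12} shows that all relational rules of $\Gtmm$ are admissible and that every $\ibox{i}_L^t$ / $\ibox{i}_R^t$ step is matched by an $\LLNS_{(N,\cless,F)}$ step on the same labels, giving the reverse inclusion.

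The main obstacle is the normalisation step of the converse direction, namely controlling the interplay of $\text{Trans}_i$ with $\text{Int}$ (for $j\cless i$) and with the $\4_{ij}$ pattern: one must check that pushing these relational rules downward never creates relational atoms outside the treelike fragment. It is precisely the transitive-closed condition on $(N,\cless,F)$ — so that $R_j\subseteq R_i$ and $R_j$ transitive forces $R_i$ transitive, matching the $\4_{ij}$ scheme — together with the upward propagation of $\D$ and $\T$ that guarantees this. The remaining verifications (the block simulations in both directions and the monotonicity/weakening lemmas) are routine case analyses analogous to Section~\ref{sec:int}.
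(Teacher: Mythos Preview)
Your proposal is correct and follows essentially the same two-direction strategy as the paper. The forward direction is identical: propositional rules are shared, $\mathbb{TL}(\ibox{i}_R^n)=\ibox{i}_R^t$, and Lemma~\ref{lemma:mml} supplies the remaining derived rules, all label-preserving. For the converse, the paper's argument is in fact a single sentence: relational rules can be restricted so as to be applied \emph{just before a $\ibox{i}_L^t$ rule}, at which point each block (relational rules followed by $\ibox{i}_L^t$) is exactly one of the derived rules $\mathbb{TL}(\ibox{ij}_L^n)$, $\mathbb{TL}(\mathsf{t}_i^n)$, $\mathbb{TL}(\4_{ij}^n)$, $\mathbb{TL}(\mathsf{d}_{ij}^n)$.

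Your normalisation is phrased differently --- pushing relational rules ``towards the initial sequents'' and then establishing primed rules plus admissibility, mirroring the intuitionistic template --- whereas the paper pushes them only as far as the nearest $\ibox{i}_L^t$. The paper's target is more natural here because, unlike $\init^t$ in $\GtI$, the modal initial axiom does not consume relational atoms; only $\ibox{i}_L^t$ does. Your route still works, but it is a detour: once the relational rules sit adjacent to $\ibox{i}_L^t$ the bundling is immediate and the separate ``primed rules $+$ admissibility'' layer is unnecessary. Your observation that the transitive-closed condition on $(N,\cless,F)$ is what keeps the $\text{Trans}_i$/$\text{Int}$/$\4_{ij}$ interaction under control is a genuine point that the paper's terse proof leaves implicit.
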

\begin{proof}
That every provable sequent in $\LLNS_{(N,\cless,F)}$ is provable in  $\Gtmm$ is a direct consequence of Lemma~\ref{lemma:mml}. For the other direction, 
observe that rule relational rules can be restricted so to be applied just before a  $\ibox{i}_L^t$ rule. 
 \qed
\end{proof}
This means that labels in $\NS_{(N,\cless,F)}$ represent worlds in a {\em $(N,\cless,F)$-frame}, and this extends the results in~\cite{DBLP:conf/aiml/GoreR12} for the case of multi-modality. 
 
\section{Non-normal modal systems}\label{sec:nn}
%!TEX root = IJCAR-18.tex
We now move our attention to non-normal modal logics, i.e., modal logics that are not extensions of $\K$. In this work, we will consider the \emph{classical modal logic} $\mathsf{E}$ and the {\em monotone modal logic} $\M$ . Although our approach is general enough for considering nested, linear nested and sequent systems for several extensions of such logics (such as the {\em classical cube} or the {\em modal tesseract} -- see~\cite{DBLP:journals/corr/LellmannP17}), there are no satisfactory labelled sequent calculi in the literature for such extensions.

%$\mathsf{E}$ is given
%(Hilbert-style) by extending the axioms and rules for classical
%propositional logic with only the \emph{congruence rule}
% for the $\Box$ connective
%\[
%\infer[\mathsf{E}]{\Box A \iimp \Box B}{ A \iimp B & B
%  \iimp A}
%\]
%which allows exchanging logically
%equivalent formulae under the modality. 
%
%$\M$ is given by the addition of the following axiom to classical
%propositional logic
%\[\mathsf{M} \quad\Box (A\land B) \iimp (\Box A \land \Box B)\]
%
For constructing nested calculi for these logics, the sequent rules should be decomposed into their different components. However, there are two complications compared to the case of normal modal logics: the need for (1) a mechanism for capturing the fact that exactly one boxed formula is introduced on the left hand side; and (2) a way of handling multiple premises of rules. The first problem is solved by introducing the indexed nesting  $\nese{\cdot}$ to capture a state where a sequent rule has been partly processed; 
the second problem is solved by making the nesting operator $\nese{\cdot}$ {\em binary}, which permits the storage of more information about the premises. 
Fig.~\ref{fig:nesnn} presents a unified nested system for logics $\NS_{\mathsf{E}}$ and $\NS_{\M}$.

\begin{figure}[t]
\[
\infer[\Box_R^{\e n}]{\Gamma \seq\Delta,\Lambda,\Box B}{\Gamma \seq\Delta,\Lambda,\nese{\;\seq B;B\seq\cdot}}
\quad
 \infer[\Box_L^{\e n}]{\Gamma,\Box A \seq\Delta,\Lambda,\nese{\Sigma\seq \Pi;\Omega\seq\Theta}}{\Gamma,\Box A, \seq\Delta,\Lambda,\nes{\Sigma,A\seq \Pi}
 &
 \Gamma,\Box A, \seq\Delta,\Lambda,\nes{\Omega\seq\Theta,A}}
 \]
\[
\infer[\M^n]{\Gamma \seq\Delta,\Lambda,\nese{\Sigma\seq \Pi;\Omega\seq\Theta}}{\Gamma\seq\Delta,\Lambda,\nese{\Sigma\seq \Pi;\Omega,\bottom\seq\Theta}}
 \]
\caption{Modal  rules for systems  $\NS_{\mathsf{E}}$ and  $\NS_\M$.}\label{fig:nesnn}
\end{figure}
$\NS_{\mathsf{E}}$ and $\NS_\M$ are shallow n-directed, fully permutable, with invertible rules and, since propositional rules cannot be applied inside the indexed nestings, the modal rules naturally occur in blocks.  Hence the nested rules can be restricted to the linear version~\cite{DBLP:journals/corr/LellmannP17} (Fig.~\ref{fig:lnsnn}), which in turn correspond to macro-rules equivalent to the sequent rules in Fig.~\ref{fig:em} for $\SC_\mathsf{E}$ and $\SC_\M$.
\begin{figure}[t]
$\qquad\qquad
\begin{array}{c}
  \infer[\Box_R^{\e l}]{\mathcal{G} \lns \Gamma \seq \Box B,
    \Delta}{\mathcal{G} \lns \Gamma \seq \Delta \lnse( \;\seq B; B
    \seq \;)}
  \quad
  \infer[\Box_L^{\e l}]{\mathcal{G} \lns \Gamma,\Box A \seq
    \Delta\lnse(\Sigma \seq \Pi; \Omega \seq \Theta)}{\mathcal{G} \lns \Gamma \seq
    \Delta\lns \Sigma, A \seq \Pi & \mathcal{G} \lns \Gamma \seq
    \Delta\lns  \Omega \seq A, \Theta}
\\\\
  \infer[\mathsf{M}^l]{\mathcal{G} \lnse(\Sigma \seq \Pi; \Omega \seq
    \Theta)}{\mathcal{G} \lnse(\Sigma \seq \Pi; \Omega, \bot \seq
    \Theta)}
\end{array}
$
  \caption{Modal  rules for systems  $\LNS_{\mathsf{E}}$ and  $\LNS_\M$.}
  \label{fig:lnsnn}
\end{figure}
\begin{figure}[t]
$\qquad\qquad\qquad\qquad\qquad
\infer[\mathsf{E}]{\Gamma,\Box A \seq \Box B, \Delta}{ A \seq B & B  \seq A}
\qquad
\infer[\mathsf{M}]{\Gamma,\Box A \seq \Box B, \Delta}{ A \seq B}
$
\caption{Modal sequent rules for non-normal modal logics $\SC_\mathsf{E}$ and $\SC_\M$.}\label{fig:em}
\end{figure}

Finally, using the labelling method in Section~\ref{sec:lbns}, the rules in Fig.~\ref{fig:nesnn} correspond to the rules in Fig.~\ref{fig:lbem}, where $xNy$ and $xN_\e(y_1,y_2)$ are  relation terms.
 
\begin{figure}[t]
 $\begin{array}{c}
  \infer[\mathbb{TL}(\Box_R^{\e n})]{\Nscr, X \seq Y, \world{x}{\Box B}}{\Nscr,xN_\e(y_1,y_2),X,\world{y_2}{B}\seq \world{y_1}{B},Y}
  \quad
  \infer[\mathbb{TL}(\Box_L^{\e n})]{\Nscr,xN_\e(y_1,y_2),\world{x}{\Box A},X \seq Y}
  {\Nscr,xNy_1,\world{y_1}{A},X \seq Y&
  \Nscr,xNy_2,X \seq Y,\world{y_2}{A}}
  \\ \\
 \infer[\mathbb{TL}(\mathsf{M}^n)]{\Nscr,xN_\e(y_1,y_2),X \seq Y}{\Nscr,xN_\e(y_1,y_2),X,\world{y_2}{\bottom}\, \seq Y} 
 \end{array}
  $
\caption{Modal rules for  $\LLNS_\mathsf{E}$ and $\LLNS_\M$  with $y_1,y_2$ fresh in $\Box_R^\e$.}\label{fig:lbem}
\end{figure}
The semantical interpretation of non-normal modalities $\mathsf{E},\M$ can be given
via {\em neighbourhood semantics}, that smoothly extends the concept of Kripke frames in the sense that accessibility relations are substituted by a family of neighbourhoods. 
\begin{definition}\label{def:neig}
A {\em neighbourhood frame} is a pair $\Fscr=(W,N)$ consisting of a set $W$ of worlds and a neighbourhood function $N:W\rightarrow \wp(\wp W)$. A {\em neighbourhood model} is  a pair  $\Mscr=(\Fscr,\Vscr)$, where  $\Vscr$ is a valuation.
We will drop the model symbol when it is clear from the context.
\end{definition}
The truth description for the box modality in the neighbourhood framework is
%\begin{equation}\label{boxn}
%w\Vdash \Box A\mbox{ iff }\tset{A}{\Mscr}\in N(w)
%\end{equation}
%where $\tset{A}{\Mscr}$ is the {\em truth set} of $A$, that is, 
%$\tset{A}{\Mscr}=\{u\in W\mid \Mscr,u\Vdash A\}$. 
%Following the notation in~\cite{Negri2017}, (\ref{boxn}) is equivalent to 
\begin{equation}\label{boxe}
w\Vdash \Box A\mbox{ iff } \exists X\in N(w).[(X\forcef  A)\wedge (A\lhd X)]
\end{equation}
where $X\forcef A$ is $\forall x\in X.x\Vdash A$ and $A\lhd X$ is $\forall y.[(y\Vdash A)\iimp y\in X]$. The rules for $\forcef$ and $\lhd$ are obtained using the geometric rule approach~\cite{Negri2017} and are depicted in Fig.~\ref{fig:auxnn}.

If the neighbourhood frame is monotonic (i.e. $\forall X\subseteq W$, if $X\in N(w)$ and $X\subseteq Y\subseteq W$ then $Y \in N(w)$), it is easy to see~\cite{Negri2017} that (\ref{boxe}) is equivalent to
\begin{equation}\label{boxm}
w\Vdash \Box A\mbox{ iff } \exists X\in N(w).X\forcef  A.
\end{equation}

\begin{figure}[t]
\begin{subfigure}{.45\linewidth}
$
\begin{array}{c}
\infer[\forcef]{x\in a, a\forcef A, X \seq Y}{x\in a, \world{x}{A}, a\forcef A, X \seq Y}
 \\ \\
% \infer[\forcef_R]{\Gamma \seq\Delta, X\forcef A}{y\in X,\Gamma \seq\Delta, \world{y}{A}}
%\]
%\[
\infer[\lhd]{A\lhd a, X \seq Y}{A\lhd a, X \seq Y,\world{z}{A}&
z\in a,A\lhd a,X\vdash Y} 
%\quad \infer[\lhd_R]{\Gamma \seq\Delta, A\lhd X}{\world{y}{A},\Gamma \seq\Delta, y\in X}
 \\ \\
\infer[\init^t]{x\in a,X\seq Y,x\in a}{}
\end{array}
$
\caption{Forcing rules, with 
$z$  arbitrary  in $\lhd_L$.}\label{fig:auxnn}
\end{subfigure}
\begin{subfigure}{.55\linewidth}
$
\begin{array}{c}
\infer[\Box_L^{\e t}]{\world{x}{\Box A}, X \seq Y}{a\in N(x), a\forcef A, A\lhd a,X \seq Y}
\\\\
 \infer[\Box_R^{\e t}]{a\in N(x),X \seq Y, \world{x}{\Box B}}{z\in a,a\in N(x),X \seq Y,\world{x}{\Box B}, \world{z}{B}&
\world{y}{B}, a\in N(x),X \seq Y,\world{x}{\Box B}, y\in a
 }
\\\\
\infer[\Box_L^{\m t}]{\world{x}{\Box A}, X \seq Y}{a\in N(x), a\forcef A, X \seq Y}
 \qquad 
 \infer[\Box_R^{\m t}]{a\in N(x),X \seq Y, \world{x}{\Box B}}{a\in N(x),y\in a,X \seq Y, \world{x}{\Box B},\world{y}{B}}
 %\infer[\Box_R]{X\in N(x),\Gamma \seq\Delta, \world{x}{\Box A}}{X\in N(x),\Gamma \seq\Delta, X\forcef A}
\end{array}
$
\caption{$a$ fresh in $\Box_L^\e,\Box_L^\m$ and $y,z$ fresh in $\Box_R^\e,\Box_R^\m$.}\label{fig:nonnormal}
\end{subfigure}
\caption{Labelled systems $\GtE$ and $\GtM$}
\end{figure}

This yields the set of labelled rules presented in Fig.~\ref{fig:nonnormal}, where
the rules are adapted from~\cite{Negri2017} by collapsing invertible proof steps.
Intuitively,  while the box left rules create a fresh neighbourhood to $x$, the box right rules create a fresh world in this new neighbourhood and move the formula to it. 
\begin{theorem}
$\GtE$ (resp. $\GtM$) is  label-preserving equivalent  to  $\LLNS_{\mathsf{E}}$ (resp. $\LLNS_\M$).
\end{theorem}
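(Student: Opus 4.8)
The statement to be proved is that the labelled sequent system $\GtE$ (resp.\ $\GtM$) is label-preserving equivalent to the labelled nested system $\LLNS_{\mathsf{E}}$ (resp.\ $\LLNS_\M$). The plan is to mimic the strategy already used for $\GtI$ versus $\LLNS_{\mLJ}$ and for $\Gtmm$ versus $\LLNS_{(N,\cless,F)}$: prove the two inclusions of provability separately, checking that labels are preserved in each direction. Both systems operate on labelled sequents of the shape $\Nscr,X\seq Y$, the only difference being which relation terms are allowed ($xNy$, $xN_\e(y_1,y_2)$ in the nested side versus $a\in N(x)$, $a\forcef A$, $A\lhd a$, $z\in a$ in the semantic side), so the equivalence amounts to translating between these two vocabularies while keeping the underlying worlds fixed.

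For the direction from $\LLNS_{\mathsf{E}}$ to $\GtE$, I would show that each of the three rules $\mathbb{TL}(\Box_R^{\e n})$, $\mathbb{TL}(\Box_L^{\e n})$, $\mathbb{TL}(\mathsf{M}^n)$ is derivable (or admissible) in $\GtE$, so that a derivation can be simulated rule by rule. The key idea is to read the binary nesting relation $xN_\e(y_1,y_2)$ as asserting the existence of a neighbourhood $a\in N(x)$ together with the membership facts $y_1\in a$ and, crucially, that $y_2$ is a \emph{generic} potential member; concretely, $\mathbb{TL}(\Box_L^{\e n})$ is obtained by first applying $\Box_L^{\e t}$ to $\world{x}{\Box A}$, producing $a\in N(x), a\forcef A, A\lhd a$, then using the $\forcef$ rule on $y_1\in a$ to get $\world{y_1}{A}$ on the left, and the $\lhd$ rule with $z:=y_2$ to get $\world{y_2}{A}$ on the right, matching the two premises. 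The rule $\mathbb{TL}(\Box_R^{\e n})$ matches $\Box_R^{\e t}$ after instantiating the fresh $a,z,y$ appropriately, and $\mathbb{TL}(\mathsf{M}^n)$ is trivial since $\world{y_2}{\bottom}$ on the left closes or weakens just as on the nested side. For the reverse direction, from $\GtE$ to $\LLNS_{\mathsf{E}}$, I would argue that a $\GtE$-derivation can be normalised so that the forcing rules $\forcef$, $\lhd$ and the relational axioms are applied only in a block immediately around each $\Box_L^{\e t}$ step (exactly as relational rules are pushed to just before $\ibox{i}_L^t$ in the $\Gtmm$ case), and that such a block is precisely an instance of $\mathbb{TL}(\Box_L^{\e n})$ read bottom-up, while an isolated $\Box_R^{\e t}$ step is an instance of $\mathbb{TL}(\Box_R^{\e n})$. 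The monotone case $\GtM$ versus $\LLNS_\M$ is the same argument using (\ref{boxm}) in place of (\ref{boxe}): there is no $\lhd$ component, so $\Box_L^{\m t}$ followed by $\forcef$ on $y\in a$ yields exactly $\mathbb{TL}(\mathsf{M}^n)$'s premise together with $\mathbb{TL}(\Box_L^{\e n})$'s left premise.

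Throughout, label preservation is immediate because in every matching step the worlds named in the nested sequent coincide with the worlds named in the semantic sequent — the neighbourhood name $a$ is auxiliary and does not appear in the endsequent — so no renaming is forced; this is the same observation as in Theorem~\ref{theo:GtI-LLNS} and its multi-modal analogue. The main obstacle I expect is the normalisation step in the $\GtE\to\LLNS_{\mathsf{E}}$ direction: one must show that applications of $\forcef$, $\lhd$ and the relational/membership axioms permute upwards past the propositional rules so they can be gathered into a block around a single $\Box_L^{\e t}$, and in particular that the generic witness $z$ in $\lhd$ can always be chosen to be the fresh $y_2$ introduced by the preceding $\Box_R^{\e t}$ rather than some arbitrary world — this is exactly the subtlety that made the $\Box_R^{\e t}$ and $\lhd$ rules in Fig.~\ref{fig:nonnormal} ``collapse invertible proof steps'' in the first place, so the permutation argument must track which invertibility facts from $\GtE$ are being used. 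Once that block structure is established the 1--1 correspondence of derivations follows as before.
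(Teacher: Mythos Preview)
Your overall architecture (two inclusions, simulation of modal rules, normalisation for the semantic-to-nested direction) is the right one and matches the paper's strategy. However, the concrete simulation you describe for the $\LLNS_{\mathsf{E}}\to\GtE$ direction contains a genuine gap: the nested modal rules cannot be derived \emph{individually} in $\GtE$, because the dependency between the $\GtE$ rules runs in the opposite order. You write that ``$\mathbb{TL}(\Box_R^{\e n})$ matches $\Box_R^{\e t}$ after instantiating the fresh $a,z,y$ appropriately'', but $\Box_R^{\e t}$ has $a\in N(x)$ as a \emph{side condition} in its conclusion; there is no way to discharge it except via a prior $\Box_L^{\e t}$. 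Likewise, in your derivation of $\mathbb{TL}(\Box_L^{\e n})$ you invoke ``the $\forcef$ rule on $y_1\in a$'', but $\Box_L^{\e t}$ does not produce any membership atom $y_1\in a$: that atom only appears in the left premise of $\Box_R^{\e t}$. So the sequence you describe ($\Box_L^{\e t}$, then $\forcef$, then $\lhd$) is not a valid $\GtE$ derivation.

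The paper repairs this by translating \emph{blocks} rather than single rules, exploiting the normal form of $\LLNS_{\mathsf{E}}$ derivations in which $\mathbb{TL}(\Box_R^{\e n})$ is immediately followed by $\mathbb{TL}(\Box_L^{\e n})$. The key observation is that the order of the two box rules is \emph{reversed} across the translation: the nested block $\mathbb{TL}(\Box_R^{\e n});\mathbb{TL}(\Box_L^{\e n})$ on $\world{x}{\Box A}\seq\world{x}{\Box B}$ becomes in $\GtE$ the sequence $\Box_L^{\e t}$ (creating $a\in N(x),a\forcef A,A\lhd a$), then $\Box_R^{\e t}$ (creating $y_1\in a$ in one premise and $\world{y_2}{B}\seq y_2\in a$ in the other), and only then $\forcef$ on the left branch and $\lhd$ with $z:=y_2$ on the right branch (the $y_2\in a\seq y_2\in a$ sub-premise closing by $\init^t$). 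One then observes that in the resulting premises the ``other'' label ($y_2$ on the left, $y_1$ on the right) is inert and may be weakened, which recovers the two premises of $\mathbb{TL}(\Box_L^{\e n})$. Your discussion of the reverse direction (forcing rules can be pushed to occur immediately after the modal rules) is correct and is exactly what the paper does.
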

\begin{proof}
For sake of readability, we will only show in sequents the principal and auxiliary formulae on the application of rules. 
Let $\pi$ be a normal proof of $\Nscr, X \seq Y$ in $\LLNS_{\mathsf{E}}$. An instance of the blocked derivation
$$
\infer[\mathbb{TL}(\Box_R^{\e n})]{\Nscr, X,\world{x}{\Box A} \seq Y,\world{x}{\Box B}}
{\infer[\mathbb{TL}(\Box_L^{\e n})]{\Nscr,xN_\e(y_1,y_2),X,\world{x}{\Box A},\world{y_2}{B}\seq Y,\world{y_1}{B}}
{\deduce{\Nscr,xNy_1,X,\world{y_1}{A},\world{y_2}{B} \seq Y, \world{y_1}{B}}{\pi_1}&
\deduce{\Nscr,xNy_2,X,\world{y_2}{B} \seq Y,\world{y_1}{B},\world{y_2}{A}}{}}}
$$
is transformed into the labelled derivation
$$
\infer[\Box_L^{\e t}]{X,\world{x}{\Box A} \seq Y,\world{x}{\Box B}}
{\infer[\Box_R^{\e t}]{a\in N(x), a\forcef A, A\lhd a,X \seq Y,\world{x}{\Box B}}
{\infer[\forcef]{y_1\in a, a\forcef A, X \seq Y,\world{y_1}{B}}
{\deduce{\world{y_1}{A},X\seq Y,\world{y_1}{B}}{\pi'}}&
\infer[\lhd]{\world{y_2}{B},A\lhd a,X\vdash Y,y_2\in a}
{\deduce{\world{y_2}{B},X\seq  Y,\world{y_2}{A}}{\pi_2'}&
\infer[\init]{y_2\in a,X\seq Y,y_2\in a}{}}}}
$$
Observe that, in $\pi_1$, the label $y_2$ will no longer be active, hence the formula $\world{y_2}{B}$ can be weakened. The same with $y_1$ in $\pi_2$. Hence $\pi_1/\pi_2$ corresponds to $\pi_1'/\pi_2 '$ and the ``only if'' part holds. The ``if'' part uses similar proof theoretical arguments as in the intuitionistic or (multi) modal case, observing that applications of the forcing rules can be restricted so to be applied immediately after the modal rules.\qed
\end{proof}
Observe that the neighbourhood information is ``hidden'' in the nested approach. That is, creating of a nesting has a two-step interpretation: one of creating a neighbourhood and another of creating a world in it. These steps are separated in Negri's labelled systems, but the nesting information becomes superfluous after the creation of the related world. This indicates that the nesting approach is more efficient proof theoretically speaking when compared to labelled systems. Also, it is curious to note that this ``two-step'' interpretation makes the nested system {\em external}, in the sense that nestings cannot be interpreted inside the syntax of the logic. In fact, it makes use of the $\langle\,]$ modality~\cite{Chellas:1980fk}.

\section{Conclusion and future work}\label{sec:conc}
%!TEX root = IJCAR-18-ext.tex

In this work we gave  sufficient conditions for transforming a nested system into a sequent calculus system, passing through linear nested systems. Moreover, we showed a semantical characterisation  of intuitionistic, multi-modal and non-normal modal systems, via a case-by-case translation between labelled nested to labelled sequent systems. In this way, we closed the cycle of syntax/semantic characterisation for a class of logical systems.

While some of the presented results are expected (or even not new as the semantical interpretation of nestings in intuitionistic or in single-normal modal logics), our approach is, as far as we know, the first done entirely using proof theoretical arguments. Indeed, the soundness and completeness results are left to the tail case of labelled systems, that carry within the syntax the semantic information explicitly. Moreover, we formally established a relationship between sequent and nested systems for a class of logics. Of course this link is not possible for logics that o not have cut-free sequent systems, such as $\Sfi$. This suggests that there should be a relationship between the fact that the nesting rules for such logic are not n-directed and the impossibility of the proposal of an adequate  sequent calculus system.

Finally, this work can be extended in a number of ways, but perhaps the more natural direction is to complete the syntactical/semantical analysis for the 
classical cube~\cite{DBLP:journals/corr/LellmannP17}. This is specially interesting since $\M\mathsf{N}\mathsf{C}=\K$, that is, we should be able to smoothly collapse the neighbourhood approach into the relational one. We observe that nestings play an important role in these transformations, since it enables to modularly build proof systems.

%\begin{itemize}
%\item Think about Joao's paraconsistent systems. More specifically: (1) we should be able to propose both a labelled sequent and a local system for negative modalities; (2) this should also somehow internalize the semantics; (3) we should be able to prove cut-elimination syntactically for the system.
%\item 
%After adding $\mathsf{N}$ and $\mathsf{C}$, we should be able to collapse $\nese{\cdot}$ to $\nes{\cdot}$ for normal modalities.
%\end{itemize}
%
 
\bibliographystyle{splncs03}

\begin{thebibliography}{10}
\providecommand{\url}[1]{\texttt{#1}}
\providecommand{\urlprefix}{URL }

\bibitem{Blackburn:2001fk}
Blackburn, P., de~Rijke, M., Venema, Y.: Modal Logic. Cambridge University
  Press (2001)

\bibitem{Brunnler:2009kx}
Br{\"u}nnler, K.: Deep sequent systems for modal logic. Arch. Math. Log.  48,
  551--577 (2009)

\bibitem{DBLP:conf/fossacs/ChaudhuriMS16}
Chaudhuri, K., Marin, S., Stra{\ss}burger, L.: Focused and synthetic nested
  sequents. In: Proc. of {FOSSACS} 2016. pp. 390--407 (2016)

\bibitem{Chellas:1980fk}
Chellas, B.F.: Modal Logic. Cambridge University Press (1980)

\bibitem{Demri:2000}
Demri, S.: Complexity of simple dependent bimodal logics. In: TABLEAUX 2000,
  LNCS, vol. 1847, pp. 190--204. Springer (2000)

\bibitem{DBLP:journals/aml/DyckhoffN12}
Dyckhoff, R., Negri, S.: Proof analysis in intermediate logics. Arch. Math.
  Log.  51(1-2),  71--92 (2012)

\bibitem{Fitting:2014}
Fitting, M.: Nested sequents for intuitionistic logics. Notre Dame Journal of
  Formal Logic  55(1),  41--61 (2014)

\bibitem{Fitting2015}
Fitting, M.: Cut-free proof systems for geach logics. IfCoLog Journal of Logics
  and their Applications  2(2),  17--64 (2015)

\bibitem{DBLP:conf/lics/GargGN12}
Garg, D., Genovese, V., Negri, S.: Countermodels from sequent calculi in
  multi-modal logics. In: IEEE LICS. pp. 315--324 (2012)

\bibitem{gentzen35}
Gentzen, G.: Investigations into logical deduction. In: {The Collected Papers
  of Gerhard Gentzen}, pp. 68--131 (1969)

\bibitem{DBLP:conf/aiml/GoreR12}
Gor{\'{e}}, R., Ramanayake, R.: Labelled tree sequents, tree hypersequents and
  nested (deep) sequents. In: AiML 9. pp. 279--299 (2012)

\bibitem{Klop:2001:TRS:559395}
Klop, J.W., Bezem, M., Vrijer, R.C.D. (eds.): Term Rewriting Systems. Cambridge
  University Press, New York, NY, USA (2001)

\bibitem{Lellmann:2015lns}
Lellmann, B.: Linear nested sequents, 2-sequents and hypersequents. In:
  TABLEAUX 2015, LNAI, vol. 9323, pp. 135--150. Springer (2015)

\bibitem{DBLP:conf/lpar/LellmannOP17}
Lellmann, B., Olarte, C., Pimentel, E.: A uniform framework for substructural
  logics with modalities. In: LPAR-21. pp. 435--455 (2017)

\bibitem{DBLP:conf/lpar/LellmannP15}
Lellmann, B., Pimentel, E.: Proof search in nested sequent calculi. In:
  LPAR-20. pp. 558--574 (2015)

\bibitem{DBLP:journals/corr/LellmannP17}
Lellmann, B., Pimentel, E.: Modularisation of sequent calculi for normal and
  non-normal modalities. CoRR  abs/1702.08193 (2017)

\bibitem{maehara54nmj}
Maehara, S.: Eine darstellung der intuitionistischen logik in der klassischen.
  Nagoya Mathematical Journal pp. 45--64 (1954)

\bibitem{DBLP:journals/jphil/Mints97}
Mints, G.: Indexed systems of sequents and cut-elimination. J. Philosophical
  Logic  26(6),  671--696 (1997)

\bibitem{Negri:2005p878}
Negri, S.: Proof analysis in modal logic. J. Philos. Logic  34,  507--544
  (2005)

\bibitem{Negri2017}
Negri, S.: Proof theory for non-normal modal logics: The neighbourhood
  formalism and basic results. IfCoLog Journal of Logics and their Applications
   4(4),  1241--1286 (2017)

\bibitem{Poggiolesi:2009vn}
Poggiolesi, F.: The method of tree-hypersequents for modal propositional logic.
  In: Towards Mathematical Philosophy, Trends In Logic, vol.~28, pp. 31--51.
  Springer (2009)

\bibitem{DBLP:books/daglib/0003059}
Vigan{\`{o}}, L.: Labelled non-classical logics. Kluwer (2000)

\end{thebibliography}

\end{document}